\newtheorem{thm}{Theorem}
\newtheorem{cor}[thm]{Corollary}
\newtheorem{lem}[thm]{Lemma}
\tikzstyle{scorestars}=[star, star points=5, star point ratio=2.25, draw,inner sep=1pt,anchor=center]
\DeclareMathOperator*{\argmin}{arg\,min}
\DeclareMathOperator*{\argmax}{arg\,max}
\definecolor{COLOR1}{HTML}{e6194b}
\definecolor{COLOR2}{HTML}{0082c8}
\definecolor{COLOR3}{HTML}{f58231}
\definecolor{COLOR4}{HTML}{911eb4}
\def\BState{\State\hskip-\ALG@thistlm}
\begin{document}
	
	\title{\vspace{-.7in}Fast and Furious Learning in Zero-Sum Games:\\
		Vanishing  Regret with Non-Vanishing Step Sizes}  
	\author{%
		{James P. Bailey} \phantom{and} {Georgios Piliouras}\\
		\normalsize Singapore University of Technology and Design\\
		\normalsize $\{$james\_bailey,georgios$\}$@sutd.edu.sg
	}
	\date{}
	\maketitle

	\begin{abstract}
		
		We show for the first time, to our knowledge, that it is possible to  reconcile in online learning in zero-sum games two seemingly contradictory objectives: vanishing time-average regret and non-vanishing step sizes. This phenomenon, that we coin   ``fast and furious" learning in games, sets a new benchmark about what is possible both in  max-min optimization as well as in multi-agent systems. 
		Our analysis does not depend on introducing a carefully tailored  dynamic. Instead we focus on the most well studied online dynamic, gradient descent.
		Similarly, we focus on the simplest textbook class of games,
		 		  two-agent two-strategy zero-sum games, such as Matching Pennies. 
		  Even for this simplest of benchmarks the best known bound for total regret, prior to our work, was the trivial one of $O(T)$, which is immediately applicable even to a non-learning agent.  
		  Based on a tight understanding of the geometry of the non-equilibrating trajectories in the dual space we prove
		a regret bound 
		 of $\Theta(\sqrt{T})$ matching the well known optimal bound for adaptive step sizes in the online setting. This guarantee holds for all fixed step-sizes without having to know the time horizon in advance and adapt the fixed step-size accordingly.
		 As a corollary, we establish that even with fixed learning rates the time-average of mixed strategies, utilities converge to their exact Nash equilibrium values.
		 
		 \bigskip
		 
	\end{abstract}
		\begin{figure}[h]
			\makebox[\textwidth][c]{
			\begin{tabular}{c c c}
				\includegraphics[scale=1.1]{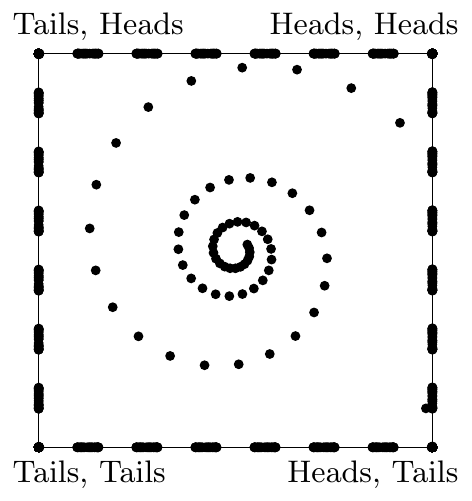}
				&
				\includegraphics[scale=1.1]{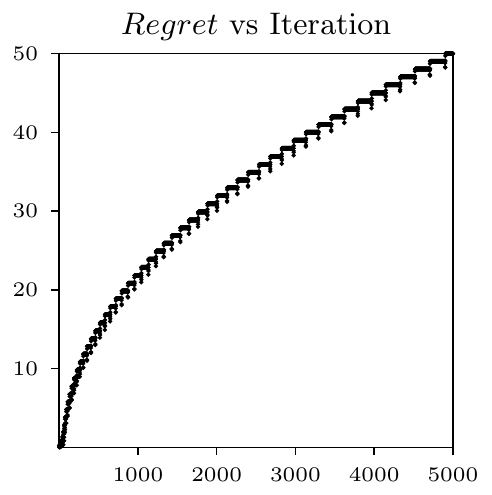}
				&
				\includegraphics[scale=1.1]{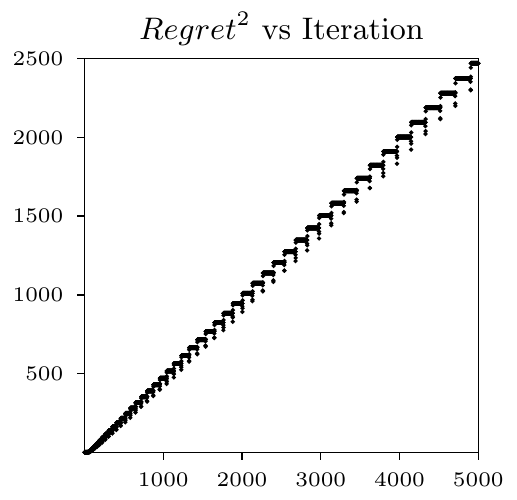}\\
				\small (a) Player Strategies
				&
				\small (b) Player 1 Regret
				& 
				\small (c) Player 1 Regret Squared
			\end{tabular}
			}
			\caption{5000 Iterations of Gradient Descent on Matching Pennies with $\eta=.15$.}\label{fig:Regret}
		\end{figure}

\newpage

\section{Introduction}
The performance of online learning algorithms such as online gradient descent in adversarial, adaptive settings is a classic staple of optimization and game theory, e.g,  \cite{Cesa06,Fudenberg98,young2004strategic}. Arguably, the most well known results in this space
are the following:

\begin{itemize}
\item[i)] Sublinear regret of $O(\sqrt{T})$ is achievable in adversarial settings but only after employing a carefully chosen sequence of shrinking step-sizes or if the time horizon is finite and known in advance and the fixed learning rate is selected accordingly.
\item[ii)] Sublinear regret algorithms ``converge" to Nash equilibria in zero-sum games.
\end{itemize}

Despite the well established nature of these results recent work has revealed some surprising insights that come to challenge the traditional ways of thinking in this area.  Specifically, in the case of zero-sum games what is referred to as ``convergence" to equilibrium, is the fact that when both agent apply regret-minimizing algorithms, both the time-average of the mixed strategy profiles as well as the utilities of the agents converge approximately to their Nash equilibrium values, where the approximation error can become arbitrarily close to zero by choosing a sufficiently small step-size.
Naturally, this statement does not imply that 
 the day-to-day behavior converges to equilibria. 
In fact, the actual realized behavior  is antithetical to convergence to equilibrium.
 \cite{BaileyEC18} showed that \textit{Nash equilibria are repelling in zero-sum games} for all follow-the-regularized-leader dynamics. As seen in Figure \ref{fig:Regret} the dynamics spiral outwards away from the equilibrium.

These novel insights about the geometry of learning dynamics in zero-sum games suggest a much richer and not well understood landscape of coupled strategic behaviors. They also raise the tantalizing possibility that we may be able to leverage this knowledge to prove tighter regret bounds in games. In fact, a series of recent papers has focused on  beating the ``black-box" regret bounds using a combination of tailored dynamics and adaptive step-sizes, e.g,  \cite{Daskalakis:2011:NNA:2133036.2133057,rakhlin2013optimization,Syrgkanis:2015:FCR:2969442.2969573,foster2016learning} but so far no new bounds have been proven for the classic setting of fixed learning rates. 
 Interestingly, \cite{foster2016learning} explicitly examine the case of fixed learning rates $\eta$ to show that learning achieves sublinear ``approximate regret" where the algorithm compares itself against $(1-\eta)$ times the performance of the best action with hindsight. In contrast, our aim is to show sublinear regret for fixed $\eta$  using the standard notion of regret.

 Intuitively, non-equilibration and more generally this emergent behavioral complexity seem like harbingers of bad news in terms of system performance as well as of significant analytical obstacles. This pessimism seems especially justified given recent results about the behavior of online dynamics with fixed step-sizes in other small games (e.g. two-by-two coordination/congestion games), where their behavior can be shown to become provably chaotic (\cite{palaiopanos2017multiplicative,Thip18}). Nevertheless, we show that we can leverage this geometric information to provide the first to our knowledge sublinear regret guarantees for online gradient descent with fixed step-size in games.
Instability of Nash equilibria is  not an obstacle, but in fact may be leveraged as a tool, for proving 
 low regret.
 
 \bigskip


\medskip
{\bf Our results.} 
We  study the dynamics of gradient descent with fixed step size in two-strategy, two-player games. We leverage a deep understanding of the geometry of its orbits to prove the first sublinear regret bounds despite the constant learning rate. 
We show that the player strategies are repelled away from the Nash equilibrium. More specifically, regardless of the choice of the initial condition there are only a finite number of iterations where both players select mixed strategies (Theorem \ref{thm:boundary}). 
We prove a worst-case regret bound of $O(\sqrt{T})$ for arbitrarily learning without prior knowledge of $T$ (Theorem \ref{thm:regret}) matching the well known optimal bound for adaptive learning rates. 
An immediate corollary of our results is that  time-average of the mixed strategy profiles as well as the utilities of the agents converge to their  \textit{exact} Nash equilibrium values (and not to approximations thereof) (Corollary \ref{cor:Convergence}). Finally, we present a matching lower bound of $\Omega(\sqrt{T})$  (Theorem \ref{lower_bound}) establishing that our regret analysis is tight. 

To obtain the upper bound, we establish a tight understanding of the geometry of the trajectories in the dual space, i.e., the trajectories of the payoff vectors. 
We show there exists a linear transformation of the payoff vectors that rotate around the Nash equilibrium.  
Moreover, the distance between the Nash equilibrium and these transformed utility vectors increases by a constant in each rotation (Lemma \ref{lem:LinearRadius}).
In addition, the time to complete a rotation is proportional to the distance between the Nash equilibrium and the transformed payoff vectors (Lemma \ref{lem:LinearSteps}).
Together, these results
imply a quadratic relationship between the number of iterations and the number of rotations completed establishing the $O(\sqrt{T})$ regret bound. We establish the lower bound by exactly tracking the strategies and regret for a single game.

\section{Preliminaries}\label{sec:prelim}

A two-player game consists of two players $\{1,2\}$ where each player has $n_i$ strategies to select from.  
Player $i$ can either select a pure strategy $j\in [n_i]$ or a mixed strategy $x_i\in {\cal X}_i=\{x_i\in \mathbb{R}^{n_i}_{\geq 0}: \sum_{j\in [n_i]} x_{ij}=1\}$.  A strategy is fully mixed if $x_i \in \mathbb{R}^{n_i}_{> 0}$.  

The most commonly studied class of games is zero-sum games.  
In a zero-sum game, there is a payoff matrix $A\in\mathbb{R}^{n_1\times n_2}$ where player 1 receives utility $x_1\cdot Ax_2$ and player $2$ receives utility $-x_1\cdot Ax_2$ resulting in the following optimization problem:
\begin{align}
	\max_{x_1\in {\cal X}_1}\min_{x_2\in {\cal X}_2}x_1\cdot Ax_2 \tag{Two-Player Zero-Sum Game}
\end{align}
The solution to this saddle problem  is the Nash equilibrium $x^{NE}$.  
If player 1 selects her Nash equilibria $x^{NE}_1$, then she guarantees her utility is $x^{NE}_1\cdot A x_2\geq x^{NE}_1\cdot A x_2^{NE}$ independent of what strategy player $2$ selects. 
$x^{NE}_1\cdot A x_2^{NE}$ is referred to as the value of the game.

\subsection{Online Learning in Continuous Time}
In many applications of game theory, players know neither the payoff matrix nor the Nash equilibria.
In such settings, players select their strategies adaptively. 
The most common way to do this in continuous time is by using a follow-the-regularized-leader (FTRL) algorithm. 
Given a strongly convex regularizer, a learning rate $\eta$, and an initial payoff vector $y_i(0)$, players select their strategies at time $T$ according to 
\begin{align}
y_1(T)&=y_1(0)+\int_{0}^TAx_2(t)dt \tag{Player 1 Payoff Vector}\\
y_2(T)&=y_2(0)-\int_{0}^TA^\intercal x_1(t)dt \tag{Player 2 Payoff Vector}\\
\tag{Continuous FTRL}\label{eqn:contFTRL} x_i(T)&=\argmax_{x_i\geq 0: \sum_{j\in [n_i]} x_{ij}=1}	\left\{y_i(T)\cdot x_i - \frac{h_i(x_i)}{\eta}\right\}
\end{align}
In this paper, we are primarily interested in the regularizer $h_i(x_i)=||x_i||_2^2/2$ resulting in the Gradient Descent algorithm: 
\begin{align}
\tag{Continuous Gradient Descent}\label{eqn:contGD}
x_i(t)&=\argmax_{x_i\geq 0: \sum_{j\in [n_i]} x_{ij}=1}	\left\{ y_i(t)\cdot x_i - \frac{||x_i||_2^2}{2\eta} \right\}
\end{align}
Continuous time FTRL learning in games has an interesting number of properties including time-average converge to the set of coarse correlated equilibria at a rate of $O(1/T)$  in general games \cite{GeorgiosSODA18} and thus to Nash
equilibria in zero-sum games. These systems can also exhibit interesting recurrent behavior e.g. periodicity \cite{DBLP:journals/corr/abs-1711-06879,nagarajan2018three}, Poincar\'{e} recurrence
\cite{GeorgiosSODA18,piliouras2014optimization,PiliourasAAMAS2014} and limit cycles \cite{paperics11}. These systems have formal  connections to Hamiltonian dynamics (i.e. energy perserving systems)  \cite{2019arXiv190301720B}. All of these types of recurrent behavior are special cases of chain recurrence  \cite{Entropy18,omidshafiei2019alpha}.

\subsection{Online Learning in Discrete Time}
In most settings, players update their strategies iteratively in discrete time steps.    
The most common class of online learning algorithms is again the family of follow-the-regularized-leader algorithms. 
\begin{align}
y_1^T&=y_1^0+\sum_{t=1}^{T-1}Ax_2^t \tag{Player 1 Payoff Vector}\\
y_2^T&=y_2^0-\sum_{t=1}^{T-1}A^\intercal x_1^t \tag{Player 2 Payoff Vector}\\
\tag{FTRL}\label{eqn:FTRL} x_i^t&=\argmax_{x_i\geq 0: \sum_{j\in [n_i]} x_{ij}=1}	\left\{y_i^t\cdot x_i - \frac{h_i(x_i)}{\eta}\right\}\\
\tag{Gradient Descent}\label{eqn:GD}
x_i^t&=\argmax_{x_i\geq 0: \sum_{j\in [n_i]} x_{ij}=1}	\left\{ y_i^t\cdot x_i - \frac{||x_i||_2^2}{2\eta} \right\}
\end{align}
where $\eta$ corresponds to the learning rate.
In Lemma \ref{lem:approx} of Appendix \ref{app:approx}, we show (\ref{eqn:FTRL}) is the first order approximation of (\ref{eqn:contFTRL}).

These algorithms again have interesting properties in zero-sum games. The time-average strategy converges to a $O(\eta)$-approximate Nash equilibrium \cite{Cesa06}. On the contrary,  Bailey and Piliouras show that the day-to-day behavior diverges away from interior Nash equilibria \cite{BaileyEC18}. For notational simplicity we do not introduce different learning rates $\eta_1, \eta_2$ but all of our proofs immediately carry over to this setting.

\subsection{Regret in Online Learning}
The most common way of analyzing an online learning algorithm is by examining its regret. 
The regret at time/iteration $T$ is the difference between the accumulated utility gained by the algorithm and the total utility of the best fixed action with hindsight.  
Formally for player 1,
\begin{align}
	Regret_1(T)&= \max_{x_1\in {\cal X}_1}\left\{\int_{0}^T x_1\cdot Ax_2(t)dt\right\}-\int_{0}^T x_1(t)\cdot Ax_2(t)dt \\
	Regret_1(T)&= \max_{x_1\in {\cal X}_1}\left\{\sum_{t=0}^T x_1\cdot Ax_2^t\right\}-\sum_{t=0}^T x_1^t\cdot Ax_2^t 
\end{align}
for continuous and discrete time respectively.

In the case of  (\ref{eqn:contFTRL}) 
 it is possible to show rather strong regret guarantees. Specifically, \cite{GeorgiosSODA18} establishes that $Regret_1(T)\in O(1)$ even for non-zero-sum games. In contrast,  (\ref{eqn:FTRL}) only guarantees $Regret_1(T)\in O(\eta\cdot T)$ for a fixed learning rate.  In this paper, we utilize the geometry of  (\ref{eqn:GD}) to show $Regret_1(T)\in O(\sqrt{ T})$ in 2x2 zero-sum games ($n_1=n_2=2$).
\section{The Geometry of \ref{eqn:GD}}\label{sec:geometry}
\begin{thm} Let $A$ be a \emph{2x2} game that has a unique fully mixed Nash equilibrium where strategies are updated according to (\ref{eqn:GD}). For any non-equilibrium initial strategies, there exists a $B$ such that $x^t$ is on the boundary for all $t\geq B$.\label{thm:boundary}
\end{thm}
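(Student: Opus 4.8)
The plan is to analyze the two-dimensional dynamics of the payoff vectors and show that their magnitude grows without bound, which forces the induced gradient-descent strategies to leave the interior of the simplex after finitely many steps. The key structural fact for a $2\times 2$ zero-sum game with a fully mixed Nash equilibrium is that each player's strategy space $\mathcal{X}_i$ is effectively one-dimensional, so we may parametrize $x_i^t = (p_i^t, 1-p_i^t)$ and track the scalars $p_1^t, p_2^t$. First I would compute the closed form of the gradient-descent projection: for a one-dimensional variable, the $\argmax$ in (\ref{eqn:GD}) is an affine function of the relevant payoff coordinate, \emph{clipped} to $[0,1]$. Concretely, writing $z_i^t$ for the difference of the two coordinates of $y_i^t$, one gets $p_i^t = \mathrm{clip}_{[0,1]}\!\left(\tfrac{1}{2} + \tfrac{\eta}{2} z_i^t\right)$, so $x^t$ is fully mixed precisely when both $z_1^t$ and $z_2^t$ lie in a bounded interval determined by $\eta$.

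Next I would track the evolution of the payoff differences $z_1^t, z_2^t$ in the regime where both strategies are interior. In this regime the clipping is inactive, so the update for $p_i^t$ is exactly affine in $z_i^t$, and substituting into the payoff-vector recursions
\begin{align}
y_1^{t+1} = y_1^t + A x_2^t, \qquad y_2^{t+1} = y_2^t - A^\intercal x_1^t \notag
\end{align}
yields a \emph{linear} map $L$ acting on $(z_1^t, z_2^t)$ (plus a constant offset absorbed by recentering at the Nash equilibrium). Because the game is zero-sum with a fully mixed equilibrium, the off-diagonal coupling has opposite signs, so $L$ is a rotation-like map: its eigenvalues are complex with modulus strictly greater than $1$. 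This is exactly the spectral picture underlying the ``repelling spiral'' of \cite{BaileyEC18} and is the mechanism that Lemma \ref{lem:LinearRadius} later quantifies. I would verify the modulus strictly exceeds $1$ by a direct determinant/trace computation and conclude that $\|(z_1^t, z_2^t)\|$ grows geometrically so long as the iterates remain in the interior region.

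The final step is to convert this growth into the boundary conclusion. Since the interior regime corresponds to a bounded box in $(z_1, z_2)$-space, and the linear map strictly expands distance from the equilibrium at each step while the iterates stay in that box, the trajectory cannot remain in the box forever: after finitely many steps at least one of $z_1^t, z_2^t$ exits the interval, activating the clip and placing the corresponding $x_i^t$ at a vertex. I would then argue this is \emph{absorbing} in the relevant sense for the theorem, namely that once the dynamics are pushed to the boundary the payoff vectors continue to drift monotonically (the pure best-response feeds a constant increment back into the other player's payoff vector), so no iterate returns to the fully mixed interior; this yields the uniform bound $B$. \textbf{The main obstacle} I anticipate is the bookkeeping at the boundary: the clean linear map only governs the fully interior phase, and one must handle the mixed cases where one player is at a vertex while the other is still interior, checking that these cases either immediately lead to both strategies on the boundary or feed back into the expanding dynamics rather than re-entering and stabilizing the interior. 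Ruling out a periodic orbit that repeatedly re-enters the interior is the delicate point, and I expect it to follow from the strict expansion (no invariant interior orbit can exist) combined with a finiteness argument on the number of interior visits.
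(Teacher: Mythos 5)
Your setup is sound and matches the paper's framework: the clipped-affine form of the strategies in terms of a scalar payoff coordinate is exactly the reduction of Section \ref{sec:dual}, and your computation that the interior map is $\bigl(\begin{smallmatrix}1 & c_1\\ -c_2 & 1\end{smallmatrix}\bigr)$ with complex eigenvalues of modulus $\sqrt{1+c_1c_2}>1$ correctly identifies the outward spiral. That argument does show the trajectory must leave the fully-mixed box in finite time. The gap is in your final step: the claim that the boundary is \emph{absorbing} is false. After one player is pushed to a vertex, the opponent's payoff coordinate keeps moving, the pinned player's coordinate is driven back toward the interval $(0,1)$, and the iterate can land with \emph{both} coordinates interior again. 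This is visible in the paper's own Figure \ref{fig:discrete}(a): after several iterates with $z_1\geq 1$ (player 1 pure), the trajectory returns to the point near $(0.98,\,0.10)$, which lies in the open unit square, so both players are fully mixed again. Hence the theorem cannot be obtained by a one-shot exit argument; one must bound the \emph{number} of returns to the interior, and your proposed fix (``strict expansion rules out an invariant interior orbit'') does not do this, because the expansion estimate only governs interior steps and you have no control over how the distance to the equilibrium evolves during the boundary phases in between. Nothing in your argument prevents the trajectory from re-entering the interior at essentially the same radius infinitely often.

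The paper closes this gap with two ingredients you are missing. First, it imports from \cite{BaileyEC18} that there exist $w>0$ and $T$ with $\min_i|x_{i1}^t-x_{i1}^{NE}|\geq w$ for all $t\geq T$, so the per-step increments of the payoff coordinates are eventually bounded below by a constant. Second, and more importantly, it replaces Euclidean distance by the Fenchel-conjugate ``energy'' $\sum_{i=1}^2\bar h_i^*(z_i^t)$, which is monotone along the \emph{entire} trajectory (the discrete step is a first-order approximation of a continuous dynamic whose level sets are convex, so the energy never decreases, on the boundary or in the interior), increases by at least a fixed constant each time the trajectory passes through a region where a strategy is mixed (Lemma \ref{lem:LinearRadius}), and is bounded above by some $u$ on the compact box of fully-mixed profiles. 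Once the energy exceeds $u$, the iterate can never again lie in $[0,1]^2$, which gives the uniform $B$. If you want to salvage your route, you would need to exhibit some quantity with exactly these three properties; the Euclidean radius of your linearized coordinates does not have the monotonicity across boundary phases, which is precisely why the convex conjugate is the right object here.
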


Theorem \ref{thm:boundary} strengthens the result for (\ref{eqn:GD}) in 2x2 games from \cite{BaileyEC18}.  
Specifically, \cite{BaileyEC18} show that strategies come arbitrarily close to the boundary infinitely often when updated with any version of (\ref{eqn:FTRL}).  
This is accomplished by closely studying the geometry of the player strategies. 
We strengthen this result for (\ref{eqn:GD}) in 2x2 games by focusing on the geometry of the payoff vectors. 
The proof of Theorem \ref{thm:boundary} relies on many of the tools developed in Section \ref{sec:regret} for Theorem \ref{thm:regret} and is deferred to Appendix \ref{app:boundary}. \\
The first step to understanding the trajectories of the dynamics of (\ref{eqn:GD}), is characterizing the solution to (\ref{eqn:GD}). 
To streamline the discussion and presentation of results, we defer the proof of Lemma \ref{lem:solution} to Appendix \ref{app:kkt}. 

\begin{lem}\label{lem:solution}The solution to (\ref{eqn:GD}) is given by
	\begin{align}
	x_{ij}^t&= 
	\begin{cases} 
	0 		& \mbox{for } j \notin S_i \\
	\eta\left(y_{ij}^t-\sum_{k\in S_i} \frac{y_{ik}^t}{|S_i|}\right) +\frac{1}{|S_i|}& \mbox{for } j \in S_i
	\end{cases}.\label{eqn:solution}
	\end{align}
	where $S_i$ is found using Algorithm \ref{alg:search}.
\end{lem}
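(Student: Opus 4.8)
The plan is to recognize that the maximization defining (\ref{eqn:GD}) is, up to an additive constant, a Euclidean projection onto the probability simplex, and then to solve it through its KKT conditions. Dropping the player index $i$ and the time superscript $t$, completing the square gives
\[
y\cdot x - \frac{\|x\|_2^2}{2\eta} = -\frac{1}{2\eta}\bigl\|x-\eta y\bigr\|_2^2 + \frac{\eta}{2}\|y\|_2^2,
\]
so the maximizer is exactly the projection of $\eta y$ onto $\{x\geq 0:\sum_j x_j=1\}$. Since the objective is strictly concave and the feasible region is compact and convex, the maximizer exists and is unique, and the KKT conditions are both necessary and sufficient for it.

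Next I would form the Lagrangian with a multiplier $\lambda$ for the equality constraint $\sum_j x_j=1$ and multipliers $\mu_j\geq 0$ for the constraints $x_j\geq 0$. Stationarity yields $x_j=\eta(y_j-\lambda+\mu_j)$, and complementary slackness $\mu_j x_j=0$ partitions the coordinates into the support set $S$, where $\mu_j=0$ and hence $x_j=\eta(y_j-\lambda)>0$, and its complement, where $x_j=0$ and dual feasibility forces $\lambda\geq y_j$. Imposing the normalization $\sum_{j\in S}x_j=1$ pins down $\lambda=\frac{1}{|S|}\sum_{k\in S}y_k-\frac{1}{\eta|S|}$; substituting this back into $x_j=\eta(y_j-\lambda)$ produces precisely the stated closed form, namely $x_j=\eta\bigl(y_j-\sum_{k\in S}\tfrac{y_k}{|S|}\bigr)+\tfrac{1}{|S|}$ on $S$ and $x_j=0$ off $S$.

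What remains, and what I expect to be the only genuine obstacle, is to show that the support $S$ returned by Algorithm \ref{alg:search} is the correct one, i.e., the unique set for which the formula above is simultaneously primal feasible ($x_j\geq 0$ for all $j\in S$) and dual feasible ($\lambda\geq y_j$ for all $j\notin S$). This is the standard active-set, or ``water-filling,'' identification for simplex projection: coordinates with larger $y_j$ should enter the support first. I would therefore argue that Algorithm \ref{alg:search} processes coordinates in decreasing order of $y_j$, tracks the candidate threshold $\lambda$ determined by the current tentative support, and halts exactly when including the next coordinate would make some entry negative. A short monotonicity argument then certifies that the resulting $S$ meets both feasibility conditions, and by uniqueness of the KKT point it must coincide with the true support. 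For the $2\times 2$ setting of interest ($|S_i|\in\{1,2\}$) this collapses to a simple sign check on the two candidate coordinates, which is why the full verification is comfortably deferred to Appendix \ref{app:kkt}.
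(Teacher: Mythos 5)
Your proposal is correct and follows essentially the same route as the paper: the appendix proof also writes down the KKT conditions, uses complementary slackness to define the support $S_i$, solves the normalization constraint for $\lambda_i^t=\sum_{j\in S_i}y_{ij}^t/|S_i|-1/(\eta|S_i|)$, and then justifies the greedy support identification by the monotonicity $y_{ij}^t\geq y_{ik}^t\Rightarrow x_{ij}^t\geq x_{ik}^t$. The only cosmetic difference is that Algorithm \ref{alg:search} runs the greedy sweep in the opposite direction --- starting from $S_i=[n_i]$ and removing the minimal-payoff coordinate whenever its entry would be negative, rather than adding coordinates in decreasing order --- but for simplex projection both sweeps identify the same support, so your argument is unaffected.
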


\begin{algorithm}
	\caption{Finding Optimal Set $S_i$}\label{alg:search}
	\begin{algorithmic}[1]
		\Procedure{Find $S_i$}{}
		\State $S_i \gets [n_i]$
		\BState	\emph{Search:}
		\State 	Select $j\in \argmin_{k\in S_i} \{y_{ik}^t\}$
		\State \textbf{if}		{$\eta\left(y_{ij}^t-\sum_{k\in S_i}\frac{y_{ik}^t}{|S_i|}\right)+\frac{1}{|S_i|}<0$}
			\State \hspace{.25in}	$S_i\gets S_i\setminus\{j\}$
			\State \hspace{.25in}	\textbf{goto} \emph{Search}
		\State \textbf{else}
			\State	\hspace{.25in} \textbf{return} $S_i$
		\EndProcedure
	\end{algorithmic}
\end{algorithm}

\subsection{Convex Conjugate of the Regularizer}
Our analysis primarily takes place in the space of payoff vectors. 
The payoff vector $y_i^t$ is a formal dual of the strategy $x_i^t$ obtained via
\begin{align}
h^*(y_i^t)=\max_{x_i\geq 0: \sum_{j\in [n_i]}x_{ij}=1} \left\{y_i^t\cdot x_i - \frac{h_i(x_i)}{\eta} \right\}
\end{align}
which is known as the convex conjugate or Fenchel Coupling of $h_i$ and is closely related to the Bregman Divergence. 
In \cite{GeorgiosSODA18} it is shown that the ``energy'' $r=\sum_{i=1}^2 h^*_i(y_i^t)$ is conserved in (\ref{eqn:contFTRL}).
By Lemma \ref{lem:approx}, (\ref{eqn:FTRL}) is the first order approximation of (\ref{eqn:contFTRL}).  The energy $\{y: r\leq \sum_{i=1}^2 h^*_i(y_i)\}$ is convex, and therefore the energy will be non-decreasing in (\ref{eqn:FTRL}). \cite{BaileyEC18} capitalized on this non-decreasing energy to show that strategies come arbitrarily close to the boundary infinitely often in (\ref{eqn:FTRL}). 

In a similar fashion, we precisely compute $h^*(y_i^t)$ to better understand the dynamics of (\ref{eqn:GD}). 
We deviate slightly from traditional analysis of (\ref{eqn:FTRL}) and embed the learning rate $\eta$ into the regularizer $h_i(x_i^t)$.  Formally, define $h_i(x_i^t)=||x_i^t||_2^2/(2\eta)$. Through the maximizing argument \cite{Kakade09}, we have
\begin{align}
	h^*_i(y_i^t)&= y_i^t\cdot x_i^t-\frac{||x_i^t||_2^2}{2\eta}.
\end{align} 
From Lemma \ref{lem:solution},
\begin{align} 
y_i^t\cdot x_i^t&=\sum_{j\in S_i} y_{ij}^t\left(\eta\left(y_{ij}^t-\sum_{k\in S_i} \frac{y_{ik}^t}{|S_i|}\right) +\frac{1}{|S_i|}\right)\\
&=\eta\sum_{j\in S_i} (y_{ij}^t)^2-\eta \sum_{j\in S_i}\sum_{k\in S_i}\frac{y_{ij}^ty_{ik}^t}{|S_i|}+\sum_{j\in S_i}\frac{y_{ij}^t}{|S_i|}
\end{align}
and
\begin{align}
\frac{||x_i^t||_2^2}{2\eta}&=\sum_{j\in S_i} \frac{\left(\eta\left(y_{ij}^t-\sum_{k\in S_i} \frac{y_{ik}^t}{|S_i|}\right) +\frac{1}{|S_i|}\right)^2}{2\eta}\\
&=\frac{\eta}{2}\sum_{j\in S_i}(y_{ij}^t)^2-\eta\sum_{j\in S_i}\sum_{k\in S_i}\frac{y_{ij}^ty_{ij}^t}{|S_i|}+\frac{\eta}{2}\frac{\left(\sum_{j\in S_i}y_{ij}^t\right)^2}{|S_i|}+\frac{1}{2\eta}\frac{1}{|S_i|}.
\end{align}
Therefore, 
\begin{align}
h^*(y_i^t)_i&= y_i^t\cdot x_i^t-\frac{||x_i^t||_2^2}{2\eta}\\
&=\frac{\eta}{2}\sum_{j\in S_i}(y_{ij}^t)^2+\sum_{j\in S_i}\frac{y_{ij}^t}{|S_i|}-\frac{\eta}{2}\frac{\left(\sum_{j\in S_i}y_{ij}^t\right)^2}{|S_i|}-\frac{1}{2\eta}\frac{1}{|S_i|}.\label{eqn:notconvex}
\end{align}
\subsection{Selecting the Right Dual Space in 2x2 Games}\label{sec:dual}
Since $h_i(x_i)=||x_i||_2^2/(2\eta)$ is a strongly smooth function in the simplex, we expect for $h^*_i(y_i)$ to be strongly convex \cite{Kakade09} -- at least when it's corresponding dual variable $x_i$ is positive. However, (\ref{eqn:notconvex}) is not strongly convex for all $y_i^t\in \mathbb{R}^{n_i}$.  This is because $y_i^{t+1}$ cannot appear anywhere in $\mathbb{R}^{n_i}$.  Rather, $y_i^{t+1}$ is contained to a space ${\cal X}^*_i$ dual to the domain $\{x_i\in \mathbb{R}^{n_i}_{\geq 0}: \sum_{j=1}^{n_i} x_{ij}=1\}$. 

There are many non-intersecting dual spaces for the payoff vectors that  yield strategies $\{x_i^t\}_{t=1}^\infty$.  
\cite{GeorgiosSODA18} informally define a dual space  when they focus the analysis on the vector $y_i(t)-y_{in_i}(t)\mathbf{1}$.  Similarly, we define a dual space that will be convenient for showing our results in 2x2 zero-sum games. Consider the payoff matrix
\begin{align}
A=\left[\begin{array}{c c}
a & b \\
c & d \\ 
\end{array}\right]&\hspace{.2in}
\end{align}

Without loss of generality, we may assume $a>\min\{0,b,c\}$, $d>\min\{0,b,c\}$, and $A$ is singular, i.e., $ad-bc=0$ (see Appendix \ref{app:singularity} for details). Denote $\Delta y_1^t$ as 
\begin{align}
\Delta y_1^t	&= y_1^{t+1}-y_1^t\\
&= Ax_{2}^t\\
&=\left[\begin{array}{c}
(a-b)x_{21}^t+b\\
(c-d)x_{21}^t+d
\end{array}\right]
\end{align}
Therefore
\begin{align}
[d-c, a-b]\cdot \Delta y_1^t=ad-bc=0
\end{align}
since $A$ is singular. When $y_{11}^t$ increases by $a-b$, $y_{12}^t$ increases by $c-d$.  Thus, the vector $[a-b,c-d]$ describes the span of  the dual space ${\cal X}_1^*$. 
Moreover, (\ref{eqn:FTRL}) is invariant to constant shifts in the payoff vector $y_{1}^t$ and therefore we may assume $[d-c, a-b] \cdot y_1^0=0$.  By induction, 
\begin{align}
[d-c, a-b]\cdot y_1^t&= 	
[d-c, a-b]\cdot (y_1^{t-1} +\Delta y_1^{t-1})\\
&= 	
[d-c, a-b]\cdot y_1^{t-1}=0
\end{align}
This conveniently allows us to express $y_{12}^t$ in terms of $y_{11}^t$, 
\begin{align}
y_{12}^t&= \frac{c-d}{a-b}y_{11}^t.
\end{align}
Symmetrically, 
\begin{align}
y_{22}^t&= \frac{b-d}{a-c}y_{21}^t.
\end{align}
Combining these relationships with Lemma \ref{lem:solution} yields
\begin{align}
x_{11}^t	&=\begin{cases}
0 & \mbox{if } \eta\left(1-\frac{c-d}{a-b}\right)\frac{y_{11}^t}{2}+\frac{1}{2}\leq0\\
1 & \mbox{if } \eta\left(1-\frac{c-d}{a-b}\right)\frac{y_{11}^t}{2}+\frac{1}{2}\geq1\\
\eta\left(1-\frac{c-d}{a-b}\right)\frac{y_{11}^t}{2}+\frac{1}{2}			& \mbox{otherwise}
\end{cases}\\
x_{21}^t	&=\begin{cases}
0 & \mbox{if } \eta\left(1-\frac{b-d}{a-c}\right)\frac{y_{21}^t}{2}+\frac{1}{2}\leq0\\
1 & \mbox{if } \eta\left(1-\frac{b-d}{a-c}\right)\frac{y_{21}^t}{2}+\frac{1}{2}\geq1\\
\eta\left(1-\frac{b-d}{a-c}\right)\frac{y_{21}^t}{2}+\frac{1}{2}			& \mbox{otherwise}
\end{cases}
\end{align} 
The selection of this dual space also allows us to employ a convenient variable substitution to plot $x^t$ and $y^t$ on the same graph. 
\begin{align}
	z_1^t&= \eta \left( 1- \frac{c-d}{a-b}\right) \frac{y_{11}^t}{2}+\frac{1}{2}\\
	z_2^t&= \eta \left( 1- \frac{b-d}{a-c}\right) \frac{y_{21}^t}{2}+\frac{1}{2}
\end{align}
The strategy $x^t$ can now be expressed as 
\begin{align}
x_{i1}^t&=
	\begin{cases}
		0 & \mbox{ if } z_{i}^t\leq0\\
		1 & \mbox{ if } z_{i}^t\geq1\\ 
				z_{i}^t & \mbox{ otherwise} 
	\end{cases}
\end{align}
Moreover, (\ref{eqn:notconvex}) can be rewritten as
\begin{align}
h^*_i(y_{1}^t)=\bar{h}^*_1(z_1^t)&=\begin{cases}
\alpha_{10}z_1^t-\beta_{10} & \mbox{if } z_{1}^t\leq0\\
\alpha_{11}z_1^t-\beta_{11} & \mbox{if } z_{1}^t\geq1\\
\gamma_1(z_1^t)^2 + \alpha_{1}z_1^t-\beta_{1}			& \mbox{otherwise}\\
\end{cases}\\
h^*_i(y_{2}^t)=\bar{h}^*_2(z_2^t)
&=\begin{cases}
\alpha_{20}z_2^t-\beta_{20} & \mbox{if } z_{2}^t\leq0\\
\alpha_{21}z_2^t-\beta_{21} & \mbox{if } z_{2}^t\geq1\\
\gamma_2(z_2^t)^2 + \alpha_{2}z_2^t-\beta_{2}			& \mbox{otherwise}\\
\end{cases}
\end{align}
where $\alpha_{i0}< 0, \alpha_{i1}> 0,$ and $\gamma_i>0$. 
Both of these expressions are obviously strongly convex when the corresponding player strategy is in $(0,1)$. 
The full details of these reduction can be found in Appendix \ref{app:dualapp}. With this notation, $(x_{11}^t,x_{21}^t)$ is simply the projection of $z^t$ onto the unit square as shown in Figure \ref{fig:discrete}.

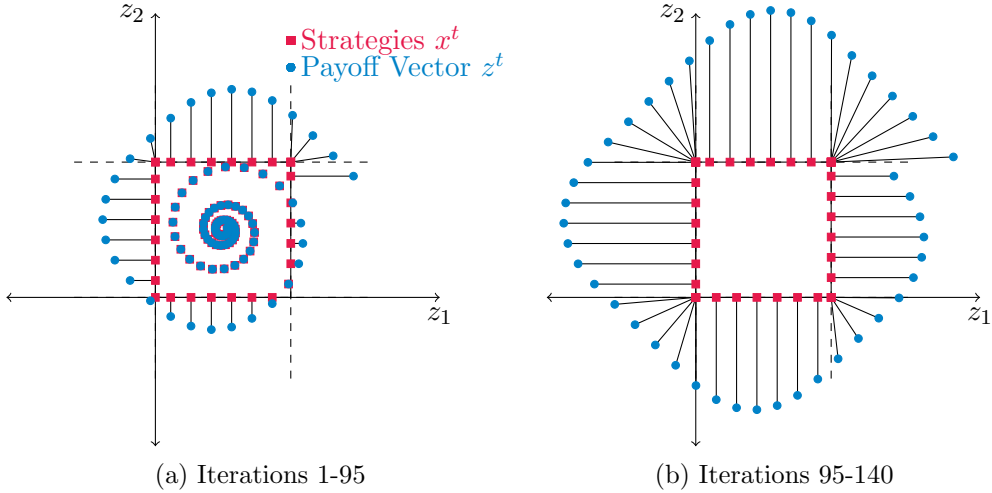
\begin{figure}[h]
	\begin{tabular}{c c}
		\begin{tikzpicture}[scale=1.8]
			\def\shift{.6}
			\def\axisshift{.5}
			\def\ETA{.5}
			\draw[<->] (0,0-\shift-\axisshift)--(0,1+\shift+\axisshift);
			\draw[<->] (0-\shift-\axisshift,0)--(1+\shift+\axisshift,0);
			\node[below] at (1+\shift+\axisshift,0) {$z_1$};
			\node[left] at (0,1+\shift+\axisshift) {$z_2$}; 
			\draw(1,0)--(1,1)--(0,1,0)--(0,0)--cycle;
			\draw[dashed] (0,-\shift)--(0,1+\shift);
			\draw[dashed] (1,-\shift)--(1,1+\shift);
			\draw[dashed] (-\shift,0)--(1+\shift,0);
			\draw[dashed] (-\shift,1)--(1+\shift,1);
			\draw (0.53,0.515)--(0.53,0.515);
\draw (0.5345,0.506)--(0.5345,0.506);
\draw (0.5363,0.49565)--(0.5363,0.49565);
\draw (0.534995,0.48476)--(0.534995,0.48476);
\draw (0.530423,0.4742615)--(0.530423,0.4742615);
\draw (0.52270145,0.4651346)--(0.52270145,0.4651346);
\draw (0.51224183,0.45832417)--(0.51224183,0.458324165);
\draw (0.49973908,0.45465162)--(0.49973908,0.454651616);
\draw (0.48613456,0.45472989)--(0.48613456,0.454729892);
\draw (0.47255353,0.45888952)--(0.47255353,0.458889523);
\draw (0.46022039,0.46712346)--(0.46022039,0.467123463);
\draw (0.45035743,0.47905735)--(0.45035743,0.479057347);
\draw (0.44407463,0.49395012)--(0.44407463,0.493950118);
\draw (0.44225967,0.51072773)--(0.44225967,0.510727729);
\draw (0.44547799,0.52804983)--(0.44547799,0.528049829);
\draw (0.45389293,0.54440643)--(0.45389293,0.544406433);
\draw (0.46721486,0.55823855)--(0.46721486,0.558238552);
\draw (0.48468643,0.56807409)--(0.48468643,0.568074093);
\draw (0.50510866,0.57266816)--(0.50510866,0.572668164);
\draw (0.52690911,0.57113557)--(0.52690911,0.571135567);
\draw (0.54824978,0.56306283)--(0.54824978,0.563062835);
\draw (0.56716863,0.5485879)--(0.56716863,0.548587901);
\draw (0.581745,0.52843731)--(0.581745,0.528437313);
\draw (0.59027619,0.50391381)--(0.59027619,0.503913814);
\draw (0.59145034,0.47683096)--(0.59145034,0.476830956);
\draw (0.58449962,0.44939586)--(0.58449962,0.449395855);
\draw (0.56931838,0.42404597)--(0.56931838,0.424045968);
\draw (0.54653217,0.40325045)--(0.54653217,0.403250455);
\draw (0.51750731,0.3892908)--(0.51750731,0.389290804);
\draw (0.48429455,0.38403861)--(0.48429455,0.384038612);
\draw (0.44950613,0.38875025)--(0.44950613,0.388750247);
\draw (0.41613121,0.40389841)--(0.41613121,0.403898408);
\draw (0.38730073,0.42905905)--(0.38730073,0.429059047);
\draw (0.36601844,0.46286883)--(0.36601844,0.462868828);
\draw (0.35487909,0.5030633)--(0.35487909,0.503063296);
\draw (0.35579808,0.54659957)--(0.35579808,0.546599569);
\draw (0.36977795,0.58986015)--(0.36977795,0.589860145);
\draw (0.39673599,0.62892676)--(0.39673599,0.62892676);
\draw (0.43541402,0.65990596)--(0.43541402,0.659905963);
\draw (0.48338581,0.67928176)--(0.48338581,0.679281756);
\draw (0.53717034,0.68426601)--(0.53717034,0.684266013);
\draw (0.59245014,0.67311491)--(0.59245014,0.673114912);
\draw (0.64438461,0.64537987)--(0.64438461,0.64537987);
\draw (0.68799858,0.60206449)--(0.68799858,0.602064486);
\draw (0.71861792,0.54566491)--(0.71861792,0.545664913);
\draw (0.73231739,0.48007954)--(0.73231739,0.480079537);
\draw (0.72634126,0.41038432)--(0.72634126,0.410384318);
\draw (0.69945655,0.34248194)--(0.69945655,0.342481941);
\draw (0.65220113,0.28264498)--(0.65220113,0.282644976);
\draw (0.58699463,0.23698464)--(0.58699463,0.236984636);
\draw (0.50809002,0.21088625)--(0.50809002,0.210886248);
\draw (0.42135589,0.20845924)--(0.42135589,0.208459243);
\draw (0.33389366,0.23205248)--(0.33389366,0.232052475);
\draw (0.25350941,0.28188438)--(0.25350941,0.281884376);
\draw (0.18807472,0.35583155)--(0.18807472,0.355831554);
\draw (0.14482419,0.44940914)--(0.14482419,0.449409138);
\draw (0.12964693,0.55596188)--(0.12964693,0.555961882);
\draw (0.14643549,0.6670678)--(0.14643549,0.667067804);
\draw (0.19655583,0.77313716)--(0.19655583,0.773137156);
\draw (0.27849698,0.86417041)--(0.27849698,0.864170406);
\draw (0.3877481,0.93062131)--(0.3877481,0.930621312);
\draw (0.5169345,0.96429688)--(0.5169345,0.964296882);
\draw (0.65622356,0.95921653)--(0.65622356,0.959216533);
\draw (0.79398852,0.91234947)--(0.79398852,0.912349465);
\draw (0.91769336,0.82415291)--(0.91769336,0.82415291);
\draw (1,0.6988449)--(1.01493923,0.698844902);
\draw (1,0.5488449)--(1.0745927,0.548844902);
\draw (1,0.3988449)--(1.08924617,0.398844902);
\draw (1,0.2488449)--(1.05889964,0.248844902);
\draw (0.98355311,0.0988449)--(0.98355311,0.098844902);
\draw (0.86320658,0)--(0.86320658,-0.046221033);
\draw (0.71320658,0)--(0.71320658,-0.155183008);
\draw (0.56320658,0)--(0.56320658,-0.219144984);
\draw (0.41320658,0)--(0.41320658,-0.238106959);
\draw (0.26320658,0)--(0.26320658,-0.212068934);
\draw (0.11320658,0)--(0.11320658,-0.14103091);
\draw (0,0)--(-0.03679342,-0.024992885);
\draw (0,0.12500711)--(-0.18679342,0.125007115);
\draw (0,0.27500711)--(-0.29929128,0.275007115);
\draw (0,0.42500711)--(-0.36678915,0.425007115);
\draw (0,0.57500711)--(-0.38928701,0.575007115);
\draw (0,0.72500711)--(-0.36678488,0.725007115);
\draw (0,0.87500711)--(-0.29928274,0.875007115);
\draw (0,1)--(-0.18678061,1.025007115);
\draw (0,1)--(-0.03678061,1.175007115);
\draw (0.11321939,1)--(0.11321939,1.325007115);
\draw (0.26321939,1)--(0.26321939,1.441041297);
\draw (0.41321939,1)--(0.41321939,1.51207548);
\draw (0.56321939,1)--(0.56321939,1.538109663);
\draw (0.71321939,1)--(0.71321939,1.519143845);
\draw (0.86321939,1)--(0.86321939,1.455178028);
\draw (1,1)--(1.01321939,1.346212211);
\draw (1,1)--(1.16321939,1.196212211);
\draw (1,1)--(1.31321939,1.046212211);
\draw (1,0.89621221)--(1.46321939,0.896212211);
			\draw[COLOR1] plot[mark=square*, only marks,mark options={fill=COLOR1}, mark size=.8] (1,1.9);
			\node[right] at (1,1.9) {\color{COLOR1}Strategies $x^t$};
			\draw[COLOR2] plot[mark=*, only marks,mark options={fill=COLOR2}, mark size=.8] (1,1.7);
			\node[right] at (1,1.7) {\color{COLOR2}Payoff Vector $z^t$};
			\draw[COLOR1] plot[mark=square*, only marks,mark options={fill=COLOR1}, mark size=.8] file {strat95.txt};
			\draw[COLOR2] plot[mark=*, only marks,mark options={fill=COLOR2}, mark size=.8] file {payoff95.txt};
		\end{tikzpicture}
		&
		\begin{tikzpicture}[scale=1.8]
		\def\shift{.6}
		\def\axisshift{.5}
		\def\ETA{.5}
		\draw[<->] (0,0-\shift-\axisshift)--(0,1+\shift+\axisshift);
		\draw[<->] (0-\shift-\axisshift,0)--(1+\shift+\axisshift,0);
		\node[below] at (1+\shift+\axisshift,0) {$z_1$};
		\node[left] at (0,1+\shift+\axisshift) {$z_2$}; 
		\draw(1,0)--(1,1)--(0,1,0)--(0,0)--cycle;
		\draw[dashed] (0,-\shift)--(0,1+\shift);
		\draw[dashed] (1,-\shift)--(1,1+\shift);
		\draw[dashed] (-\shift,0)--(1+\shift,0);
		\draw[dashed] (-\shift,1)--(1+\shift,1);
		\draw (1,0.89621221)--(1.46321939,0.896212211);
\draw (1,0.74621221)--(1.58208305,0.746212211);
\draw (1,0.59621221)--(1.65594672,0.596212211);
\draw (1,0.44621221)--(1.68481038,0.446212211);
\draw (1,0.29621221)--(1.66867404,0.296212211);
\draw (1,0.14621221)--(1.60753771,0.146212211);
\draw (1,0)--(1.50140137,-0.003787789);
\draw (1,0)--(1.35140137,-0.153787789);
\draw (1,0)--(1.20140137,-0.303787789);
\draw (1,0)--(1.05140137,-0.453787789);
\draw (0.90140137,0)--(0.90140137,-0.603787789);
\draw (0.75140137,0)--(0.75140137,-0.724208201);
\draw (0.60140137,0)--(0.60140137,-0.799628612);
\draw (0.45140137,0)--(0.45140137,-0.830049023);
\draw (0.30140137,0)--(0.30140137,-0.815469434);
\draw (0.15140137,0)--(0.15140137,-0.755889845);
\draw (0.00140137,0)--(0.00140137,-0.651310256);
\draw (0,0)--(-0.14859863,-0.501730667);
\draw (0,0)--(-0.29859863,-0.351730667);
\draw (0,0)--(-0.44859863,-0.201730667);
\draw (0,0)--(-0.59859863,-0.051730667);
\draw (0,0.09826933)--(-0.74859863,0.098269333);
\draw (0,0.24826933)--(-0.86911783,0.248269333);
\draw (0,0.39826933)--(-0.94463703,0.398269333);
\draw (0,0.54826933)--(-0.97515623,0.548269333);
\draw (0,0.69826933)--(-0.96067543,0.698269333);
\draw (0,0.84826933)--(-0.90119463,0.848269333);
\draw (0,0.99826933)--(-0.79671383,0.998269333);
\draw (0,1)--(-0.64723303,1.148269333);
\draw (0,1)--(-0.49723303,1.298269333);
\draw (0,1)--(-0.34723303,1.448269333);
\draw (0,1)--(-0.19723303,1.598269333);
\draw (0,1)--(-0.04723303,1.748269333);
\draw (0.10276697,1)--(0.10276697,1.898269333);
\draw (0.25276697,1)--(0.25276697,2.017439242);
\draw (0.40276697,1)--(0.40276697,2.091609151);
\draw (0.55276697,1)--(0.55276697,2.12077906);
\draw (0.70276697,1)--(0.70276697,2.104948969);
\draw (0.85276697,1)--(0.85276697,2.044118878);
\draw (1,1)--(1.00276697,1.938288787);
\draw (1,1)--(1.15276697,1.788288787);
\draw (1,1)--(1.30276697,1.638288787);
\draw (1,1)--(1.45276697,1.488288787);
\draw (1,1)--(1.60276697,1.338288787);
\draw (1,1)--(1.75276697,1.188288787);
\draw (1,1)--(1.90276697,1.038288787);
		\draw[COLOR1] plot[mark=square*, only marks,mark options={fill=COLOR1}, mark size=.8] file {strat140.txt};
		\draw[COLOR2] plot[mark=*, only marks,mark options={fill=COLOR2}, mark size=.8] file {payoff140.txt};
		\end{tikzpicture}
		\\
		\small (a) Iterations 1-95
		&
		\small (b) Iterations 95-140
	\end{tabular}
\caption{Strategies and Transformed Payoff Vectors Rotating Clockwise and Outwards in Matching Pennies with $\eta=.15$ and $(y^0_{11},y^0_{11})=(.2,-.3)$.}\label{fig:discrete}
\end{figure}

\section{$\Theta(\sqrt{T})$ Regret in 2x2 Zero-Sum Games}\label{sec:regret}

\begin{thm}
	Let $A$ be a \emph{2x2} game that has a unique fully mixed Nash equilibrium. When $x^t$ is updated according to (\ref{eqn:GD}), $Regret_1(T)\in  O\left(\sqrt{T}\right)$. \label{thm:regret}
\end{thm}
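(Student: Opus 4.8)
The plan is to convert $Regret_1(T)$ into a sum of second-order (curvature) terms along the dual trajectory, and then to control that sum using the rotational geometry developed above. Recall that the maximizing argument gives $x_1^t = \nabla h_1^*(y_1^t)$ and that $\Delta y_1^t = Ax_2^t$. First I would use the convex-conjugate machinery to rewrite the cumulative utility. Since $h_1^*$ is convex, a first-order expansion with remainder yields, for each $t$,
\begin{align}
h_1^*(y_1^{t+1}) - h_1^*(y_1^t) = x_1^t\cdot Ax_2^t + D_{h_1^*}(y_1^{t+1},y_1^t),
\end{align}
where $D_{h_1^*}$ denotes the Bregman divergence of $h_1^*$. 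Telescoping over $t$ gives $\sum_t x_1^t\cdot Ax_2^t = h_1^*(y_1^{T+1}) - h_1^*(y_1^0) - \sum_t D_{h_1^*}(y_1^{t+1},y_1^t)$. For the benchmark term, because $h_1^*(y)=\max_{x_1}\{x_1\cdot y - h_1(x_1)\}$ and $h_1$ is bounded on the simplex, $\max_{x_1} x_1\cdot y = h_1^*(y) + O(1)$, and $\max_{x_1} x_1\cdot(y_1^{T+1}-y_1^0)$ differs from $\max_{x_1} x_1\cdot y_1^{T+1}$ by at most $\|y_1^0\|_\infty=O(1)$. The leading $h_1^*(y_1^{T+1})$ terms then cancel, leaving the clean identity
\begin{align}
Regret_1(T) = \sum_{t=0}^{T} D_{h_1^*}(y_1^{t+1},y_1^t) + O(1).
\end{align}

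Next I would bound the summands using the explicit normal form $\bar{h}^*_1$ from Section \ref{sec:dual}: $h_1^*$ is linear in $y_1^t$ exactly when player $1$ is pure ($z_1^t\le 0$ or $z_1^t\ge 1$) and strongly convex with a fixed curvature $\gamma_1$ only when $z_1^t\in(0,1)$. Hence $D_{h_1^*}(y_1^{t+1},y_1^t)=0$ whenever $y_1^t$ and $y_1^{t+1}$ lie in the same linear piece (player $1$ plays the same pure strategy on both steps), while since $\Delta y_1^t=Ax_2^t$ is uniformly bounded and $\gamma_1$ is constant, every summand is $O(1)$. Thus $Regret_1(T)$ is, up to a constant, the number of iterations on which player $1$ is mixed or switches its pure strategy, times a constant.

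It remains to count those iterations, which is where the rotational picture enters. By Lemma \ref{lem:LinearRadius} a fixed linear image of $z^t$ circulates around the equilibrium with a radius that grows by a fixed constant per revolution, so after $K$ revolutions $R_K=\Theta(K)$; by Lemma \ref{lem:LinearSteps} the $k$-th revolution consumes $\Theta(R_k)=\Theta(k)$ iterations. Summing, completing $K$ revolutions takes $\Theta(\sum_{k\le K} k)=\Theta(K^2)$ steps, so by time $T$ the trajectory has completed $K=\Theta(\sqrt{T})$ revolutions. The crucial remaining estimate is that, per revolution, player $1$ is non-pure (or switches) on only $O(1)$ iterations, uniformly in $R$: the mixed region $\{0<z_1<1\}$ is a strip of bounded width, and a circle of radius $R$ meets a fixed-width strip in total arc length $O(1)$ (the two crossing arcs each subtend angle $\Theta(1/R)$, hence have length $\Theta(1)$). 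Since Theorem \ref{thm:boundary} forces the trajectory into the central square $\{0<z_1<1,\,0<z_2<1\}$ only finitely often, on these crossing arcs player $2$ is pure, so the per-step displacement is bounded below and each arc is traversed in $O(1)$ steps. Multiplying $O(1)$ mixed steps per revolution by $K=\Theta(\sqrt{T})$ revolutions bounds the number of nonzero Bregman terms by $O(\sqrt{T})$, and the identity above gives $Regret_1(T)\in O(\sqrt{T})$.

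The main obstacle is this final counting step: making rigorous that the rotational normal form of Lemmas \ref{lem:LinearRadius}--\ref{lem:LinearSteps} simultaneously (i) yields the quadratic time-versus-revolution law and (ii) confines player $1$ to boundedly many non-vertex steps per revolution, uniformly as $R\to\infty$. One must in particular verify that the per-step displacement stays bounded away from zero on the mixed arcs (using that player $2$ is pure there, via Theorem \ref{thm:boundary}) and that region-crossing steps, where $h_1^*$ is only one-sidedly quadratic, still contribute $O(1)$ each. The reduction of regret to the curvature sum in the first paragraph, by contrast, is standard Fenchel-coupling bookkeeping.
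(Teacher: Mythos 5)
Your proposal is correct and follows essentially the same route as the paper: the reduction of regret to per-step terms that vanish whenever player 1's strategy is pure and unchanged, the $O(1)$ bound on such non-vanishing steps per rotation (the paper's Lemma \ref{lem:constant}), the constant energy growth per rotation (Lemma \ref{lem:LinearRadius}), and the $\Theta(r_j)$ steps per rotation (Lemma \ref{lem:LinearSteps}) combine exactly as in the paper's proof to give $\Theta(\sqrt{T})$ rotations and hence $O(\sqrt{T})$ regret. The only cosmetic difference is that you obtain the first reduction by Bregman/Fenchel telescoping of $h_1^*$ along the dual trajectory, whereas the paper invokes the standard bound $Regret_1(T)\leq O(1)+\sum_{t}(x_1^{t+1}-x_1^t)\cdot Ax_2^t$ from \cite{Cesa06}; the two sums vanish on the same iterations and are each $O(1)$ otherwise, so the arguments coincide.
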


It is well known that if an algorithm admits sublinear regret in zero-sum games, then the time-average play converges to a Nash equilibirum.  Thus, Theorem \ref{thm:regret} immediately results in the following corollary. 

\begin{cor}\label{cor:Convergence}
		Let $A$ be a \emph{2x2} game that has a unique fully mixed Nash equilibrium. When $x^t$ is updated according to (\ref{eqn:GD}), average strategy $\bar{x}^T=\sum_{t=1}^T\frac{x^t}{T}$ converges to $x^{NE}$ as $T\to \infty$. 
\end{cor}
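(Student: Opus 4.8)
The plan is to derive the corollary from Theorem \ref{thm:regret} through the folklore reduction connecting sublinear regret to convergence of the empirical play, taking care that the uniqueness hypothesis is precisely what upgrades a vanishing duality gap into genuine convergence.

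First I would obtain a matching regret bound for the second player. Player 2 effectively runs (\ref{eqn:GD}) on the 2x2 zero-sum game with payoff matrix $-A^\intercal$, whose row player receives $x_2\cdot(-A^\intercal)x_1=-x_1\cdot Ax_2$; this transposed game shares the same unique fully mixed Nash equilibrium (with the two players' roles swapped), so Theorem \ref{thm:regret} applies verbatim and yields $Regret_2(T)\in O(\sqrt{T})$.

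Next I would exploit the bilinearity of $x_1\cdot Ax_2$ to re-express the two regrets in terms of the time-average strategies $\bar{x}_1^T$ and $\bar{x}_2^T$. Writing $\sum_t x_1\cdot Ax_2^t=T\,x_1\cdot A\bar{x}_2^T$ and $\sum_t x_1^t\cdot Ax_2=T\,\bar{x}_1^T\cdot Ax_2$, the realized-utility terms $\sum_t x_1^t\cdot Ax_2^t$ cancel when the two regrets are summed, leaving
\begin{align*}
Regret_1(T)+Regret_2(T)=T\left(\max_{x_1\in{\cal X}_1} x_1\cdot A\bar{x}_2^T-\min_{x_2\in{\cal X}_2}\bar{x}_1^T\cdot Ax_2\right).
\end{align*}
The parenthesized quantity is exactly the duality (Nash) gap of the profile $\bar{x}^T$; it is nonnegative by the minimax inequality and vanishes precisely when $\bar{x}^T$ is a Nash equilibrium. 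Dividing by $T$ and inserting the two $O(\sqrt{T})$ bounds shows this gap is $O(1/\sqrt{T})$ and hence tends to zero.

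Finally I would promote ``vanishing gap'' to ``convergence,'' which is where uniqueness is indispensable. The duality gap is a continuous function on the compact set ${\cal X}_1\times{\cal X}_2$, and by hypothesis its zero set is the single point $x^{NE}$. A routine compactness argument then closes the proof: any subsequential limit of $\bar{x}^T$ has zero gap by continuity and therefore coincides with $x^{NE}$, so the whole sequence must converge to $x^{NE}$. I expect the only nontrivial point to be the symmetric regret bound for player 2; once that is secured, every remaining step is standard.
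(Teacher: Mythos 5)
Your proposal is correct and follows exactly the route the paper takes: the paper derives the corollary from Theorem \ref{thm:regret} by invoking the well-known fact that sublinear regret in zero-sum games forces the time-average play to converge to Nash equilibrium, and your argument is simply a careful unpacking of that folklore reduction (symmetric bound for player 2, summing regrets to bound the duality gap of $\bar{x}^T$, then uniqueness plus compactness). No gaps; the details you supply are the standard ones the paper leaves implicit.
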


\begin{proof}[Proof of Theorem \ref{thm:regret}]	
	The result is simple if $x^1=x^{NE}$.  Neither player strategy will ever change.  Since player 1's opponent is playing the fully mixed $x_2^{NE}$, player 1's utility is constant independent of what strategy is selected and therefore the regret is always $0$. Now consider $x^1\neq x^{NE}$. 
	
	The main details of the proof are captured in Figure \ref{fig:ProofSketch}.
	Specifically in Section \ref{sec:partition}, we  establish break points $t_0<t_1<...<t_k=T+1$ and analyze the impact strategies $x^{t_j}, x^{t_j+1},...,x^{t_{j+1}-1}$ have on the regret. The strategies $x^{t_j}, x^{t_j+1},...,x^{t_{j+1}-1}$ are contained in adjacent red and green sections as shown in Figure \ref{fig:ProofSketch}.
	
	\begin{figure}\centering
		\begin{tikzpicture}[scale=1.8]
		\def\shift{.6}
		\def\w{.4}		
		\def\axisshift{.6}
		\def\ETA{.5}
		\draw[<->] (0,0-\shift-\axisshift)--(0,1+\shift+\axisshift);
		\draw[<->] (0-\shift-\axisshift,0)--(1+\shift+\axisshift,0);
		\node[below left] at (1+\shift+\axisshift,0) {$z_1$};
		\node[below left] at (0,1+\shift+\axisshift) {$z_2$}; 
		\draw(1,0)--(1,1)--(0,1,0)--(0,0)--cycle;
		\draw[ultra thick] (0,0-\shift-\axisshift)--(0,1+\shift+\axisshift);
		\draw[ultra thick] (1,0-\shift-\axisshift)--(1,1+\shift+\axisshift);
		\draw[ultra thick] (0-\shift-\axisshift,0)--(1+\shift+\axisshift,0);
		\draw[ultra thick] (0-\shift-\axisshift,1)--(1+\shift+\axisshift,1);
		
		\fill[red,opacity=.3] (0,0)--(1,0)--(1,0-\shift-\axisshift)--(0,0-\shift-\axisshift)--cycle;
		\fill[red,opacity=.3] (0,1)--(1,1)--(1,1+\shift+\axisshift)--(0,1+\shift+\axisshift)--cycle;
		\fill[red,opacity=.3] (0,0)--(0,1)--(0-\shift-\axisshift,1)--(0-\shift-\axisshift,0)--cycle;
		\fill[red,opacity=.3] (1,0)--(1,1)--(1+\shift+\axisshift,1)--(1+\shift+\axisshift,0)--cycle;
		
		\fill[red,opacity=.3] (2.5,1.4-\w)--(2.5,.6-\w)--(6.6,.6-\w)--(6.6,1.4-\w)--cycle;
		
		\node[right] at (2.5,1.2-\w) {Energy $r_j$ increases by $\Theta(1)$ per iteration.};
		
		\node[right] at (2.5,.8-\w) {There are $\Theta(1)$  iterations per rotation.};

		\fill[green,opacity=.3] (2.5,.4-\w)--(2.5,-.4-\w)--(6.6,-.4-\w)--(6.6,.4-\w)--cycle;
		
		\node[right] at (2.5,.2-\w) {Energy $r_j$ does not change per iteration.};
		
		\node[right] at (2.5,-.2-\w) {There are $\Theta(r_j)$ iterations per rotation.};

		\fill[green,opacity=.3] (0,0)--(0,0-\shift-\axisshift)--(0-\shift-\axisshift,0-\shift-\axisshift)--(0-\shift-\axisshift,0)--cycle;
		\fill[green,opacity=.3] (1,0)--(1,0-\shift-\axisshift)--(1+\shift+\axisshift,0-\shift-\axisshift)--(1+\shift+\axisshift,0)--cycle;
		\fill[green,opacity=.3] (0,1)--(0,1+\shift+\axisshift)--(0-\shift-\axisshift,1+\shift+\axisshift)--(0-\shift-\axisshift,1)--cycle;
		\fill[green,opacity=.3] (1,1)--(1,1+\shift+\axisshift)--(1+\shift+\axisshift,1+\shift+\axisshift)--(1+\shift+\axisshift,1)--cycle;
		
		\draw[COLOR2] plot[mark=*, only marks,mark options={fill=COLOR2}, mark size=.8] (2.6,1.8-\w);
		\node[right] at (2.7,1.8-\w)
		{\color{COLOR2} Payoff Vector $z^t$};

		\draw[COLOR1] plot[mark=square*, only marks,mark options={fill=COLOR1}, mark size=.8] (2.6,1.8);
		\node[right] at (2.7,1.8) {\color{COLOR1}Strategies $x^t$};

		\draw[COLOR2] plot[mark=*, only marks,mark options={fill=COLOR2}, mark size=.8] file {payoff140.txt};
				\draw (1,0.89621221)--(1.46321939,0.896212211);
\draw (1,0.74621221)--(1.58208305,0.746212211);
\draw (1,0.59621221)--(1.65594672,0.596212211);
\draw (1,0.44621221)--(1.68481038,0.446212211);
\draw (1,0.29621221)--(1.66867404,0.296212211);
\draw (1,0.14621221)--(1.60753771,0.146212211);
\draw (1,0)--(1.50140137,-0.003787789);
\draw (1,0)--(1.35140137,-0.153787789);
\draw (1,0)--(1.20140137,-0.303787789);
\draw (1,0)--(1.05140137,-0.453787789);
\draw (0.90140137,0)--(0.90140137,-0.603787789);
\draw (0.75140137,0)--(0.75140137,-0.724208201);
\draw (0.60140137,0)--(0.60140137,-0.799628612);
\draw (0.45140137,0)--(0.45140137,-0.830049023);
\draw (0.30140137,0)--(0.30140137,-0.815469434);
\draw (0.15140137,0)--(0.15140137,-0.755889845);
\draw (0.00140137,0)--(0.00140137,-0.651310256);
\draw (0,0)--(-0.14859863,-0.501730667);
\draw (0,0)--(-0.29859863,-0.351730667);
\draw (0,0)--(-0.44859863,-0.201730667);
\draw (0,0)--(-0.59859863,-0.051730667);
\draw (0,0.09826933)--(-0.74859863,0.098269333);
\draw (0,0.24826933)--(-0.86911783,0.248269333);
\draw (0,0.39826933)--(-0.94463703,0.398269333);
\draw (0,0.54826933)--(-0.97515623,0.548269333);
\draw (0,0.69826933)--(-0.96067543,0.698269333);
\draw (0,0.84826933)--(-0.90119463,0.848269333);
\draw (0,0.99826933)--(-0.79671383,0.998269333);
\draw (0,1)--(-0.64723303,1.148269333);
\draw (0,1)--(-0.49723303,1.298269333);
\draw (0,1)--(-0.34723303,1.448269333);
\draw (0,1)--(-0.19723303,1.598269333);
\draw (0,1)--(-0.04723303,1.748269333);
\draw (0.10276697,1)--(0.10276697,1.898269333);
\draw (0.25276697,1)--(0.25276697,2.017439242);
\draw (0.40276697,1)--(0.40276697,2.091609151);
\draw (0.55276697,1)--(0.55276697,2.12077906);
\draw (0.70276697,1)--(0.70276697,2.104948969);
\draw (0.85276697,1)--(0.85276697,2.044118878);
\draw (1,1)--(1.00276697,1.938288787);
\draw (1,1)--(1.15276697,1.788288787);
\draw (1,1)--(1.30276697,1.638288787);
\draw (1,1)--(1.45276697,1.488288787);
\draw (1,1)--(1.60276697,1.338288787);
\draw (1,1)--(1.75276697,1.188288787);
\draw (1,1)--(1.90276697,1.038288787);
		\draw[COLOR1] plot[mark=square*, only marks,mark options={fill=COLOR1}, mark size=.8] file {strat140.txt};
		\end{tikzpicture}
		\caption{Partitioning of Payoff Vectors for the Proof of Theorem \ref{thm:regret}.}\label{fig:ProofSketch}
	\end{figure}
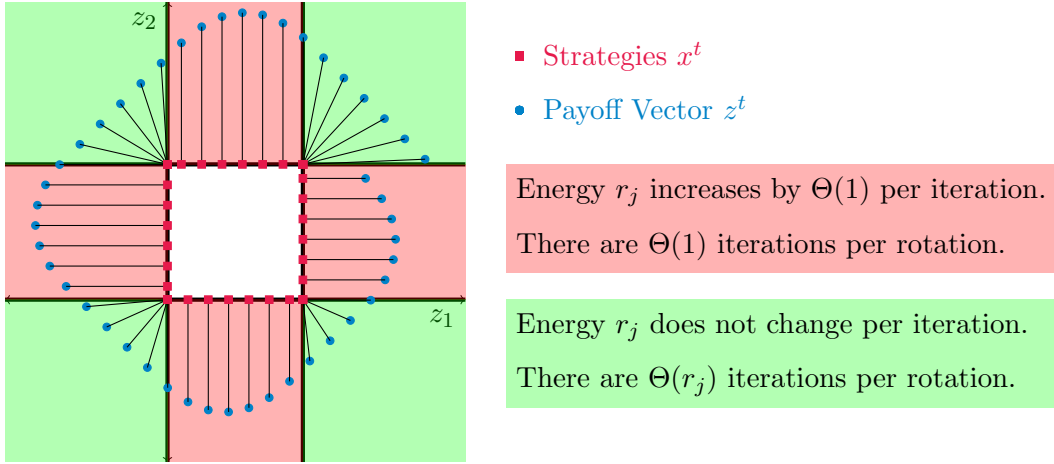

	Next in Section \ref{sec:constant}, we show  that there exists $\Theta(1)$ iterations where $x^t\neq x^{t+1}$ for each partitioning, $\{t_j,t_j+1,...,t_{j+1}-1\}$. 
	Specifically, we show that $\Theta(1)$ consecutive payoff vectors appear in a red section of Figure \ref{fig:ProofSketch}. 
	The remaining points all appear in a green section and the corresponding player strategies are equivalent. 
	This implies
		\begin{align}\sum_{t=t_j}^{t_{j+1}-1}(x_1^{t+1}-x_1^t)\cdot Ax_2^t
			&=\sum_{{t\in[t_j, t_{j+1}-1]:}{ x_1^{t+1}\neq x_1^t}}(x_1^{t+1}-x_1^t)\cdot Ax_2^t\\
			&\in \sum_{{t\in[t_j, t_{j+1}-1]:}{ x_1^{t+1}\neq x_1^t}}O(1)\\
			&\in O(1)
		\end{align}
		
	Denote $r_j=\sum_{i=1}^t \bar{h}_i^*(z_i^{t_j})$ as the total energy of the system in iteration $t_j$.  In Section \ref{sec:energy}, we show this energy increases linearly in each partition, i.e., $r_{j+1}-r_j\in \Theta(1)$.
	In Section \ref{sec:steps}, we also show that the size of each partition is proportional to the energy in the system at the beginning of that partition, i.e., $t_{j+1}-t_j\in \Theta(r_j)$. 
	Combining these two, $t_{j}\in \Theta(j^2)$. 
	Therefore $T\in \Theta(k^2)$ and $k\in \Theta\left(\sqrt{T}\right)$ where $k$ is the total number of partitions.
	Finally, it is well known (\cite{Cesa06}) that the regret of player $1$ in zero-sum games through $T$ iterations is bounded by
	\begin{align}
		Regret_1(T) &\leq O(1)+ \sum_{t=0}^T(x_1^{t+1}-x_1^t)\cdot Ax_2^t\\
					&\leq O(1)+ \sum_{t=0}^{t_0-1}(x_1^{t+1}-x_1^t)\cdot Ax_2^t +\sum_{i=1}^{k}\sum_{t=t_{i-1}}^{t_i-1}(x_1^{t+1}-x_1^t)\cdot Ax_2^t\label{eqn:partion}\\
					&\in O(1) +\sum_{i=1}^{k}O(1)\label{eqn:steps}\\
					&\in O\left(\sqrt{T}\right)\label{eqn:energy}
	\end{align}
	completing the proof of the theorem. 
\end{proof}

Next, we provide a game and initial conditions that has regret $\Theta(\sqrt{T})$ establishing that the bound in Theorem \ref{thm:regret} is tight. 

\begin{thm}\label{lower_bound}
	Consider the game Matching Pennies with learning rate $\eta=1$ and initial conditions $y_{1}^0=y_2^0=(1,0)$. Then player 1's regret is $\Theta(\sqrt{T})$ when strategies are updated with (\ref{eqn:GD}). 
\end{thm}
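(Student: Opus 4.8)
The plan is to track the dynamics exactly for this single game and initial condition, exploiting the fact that $\eta=1$ together with integer initial data confines the iterates to a lattice. First I would specialize the reduction of Section~\ref{sec:dual} to Matching Pennies, $A=\begin{pmatrix}1&-1\\-1&1\end{pmatrix}$, which is already singular with $\frac{c-d}{a-b}=\frac{b-d}{a-c}=-1$. Substituting $\eta=1$ and $y_1^0=y_2^0=(1,0)$ (shifted into the dual space via the shift-invariance noted in Section~\ref{sec:dual}) gives $z^0=(1,1)$ and the scalar system $z_1^{t+1}=z_1^t+2x_{21}^t-1$, $z_2^{t+1}=z_2^t+1-2x_{11}^t$, where $x_{i1}^t=\max\{0,\min\{1,z_i^t\}\}$ is the projection of $z_i^t$ onto $[0,1]$.

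The key observation is that $z^0$ is integral and each increment $2x_{i1}^t-1$ equals $\pm1$ whenever $z_i^t$ is an integer (since then $x_{i1}^t\in\{0,1\}$). By induction every $z^t$ is integral, so both players always play pure strategies and no mixing ever occurs; the orbit is a deterministic, constant-velocity clockwise rotation through the four regions $\{z_1\ge1,z_2\ge1\}$, $\{z_1\ge1,z_2\le0\}$, $\{z_1\le0,z_2\le0\}$, $\{z_1\le0,z_2\ge1\}$ corresponding to the four pure profiles. I would then prove by induction the visit structure visible in the simulation: the orbit spends exactly $m$ consecutive iterations in the $m$-th region it visits, cycling through the four regions with period four, entering $\{z_1\ge1\}$ precisely at $z_1=1$ and $\{z_1\le0\}$ precisely at $z_1=0$. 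Consequently, after $M$ completed visits the horizon satisfies $T+1=\sum_{m=1}^M m=\Theta(M^2)$, so $M=\Theta(\sqrt T)$.

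Next I would convert this structure into closed forms. Player $1$'s instantaneous utility is $x_1^t\cdot Ax_2^t=(2x_{11}^t-1)(2x_{21}^t-1)$, which equals $+1$ on the two ``matching'' regions and $-1$ on the two ``mismatching'' regions, so the cumulative utility $U(T)=\sum_t x_1^t\cdot Ax_2^t=\sum_{m=1}^M(-1)^{m+1}m$ has magnitude $\Theta(M)$ with an alternating sign. Because player $1$ has only two pure actions and the two entries of the cumulative payoff vector are negatives of one another, the best fixed action earns exactly $|z_1^{T+1}-z_1^0|$, so that, writing $V_1(t):=z_1^{t+1}-1$, we have $Regret_1(T)=|V_1(T)|-U(T)$. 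To obtain a clean bound for every $T$ (not just at visit boundaries) I would show $Regret_1$ is non-decreasing: from $\Delta Regret_1(t)=\bigl(|V_1(t)|-|V_1(t-1)|\bigr)-(2x_{11}^t-1)(2x_{21}^t-1)$ and the sign of $V_1$ on each region (on $\{z_1\ge1\}$ one has $V_1>0$ with $\Delta V_1=+1$; on $\{z_1\le0\}$ one has $V_1<0$ with $\Delta V_1=-1$), the increment is $0$ on matching regions and lies in $\{0,2\}$ on mismatching regions. Since at the end of each even-indexed visit $U(T)=-M/2$ while $|V_1(T)|\le1$, monotonicity gives $Regret_1(T)\ge M/2-O(1)=\Omega(\sqrt T)$; combined with the $O(\sqrt T)$ upper bound of Theorem~\ref{thm:regret} this yields $Regret_1(T)\in\Theta(\sqrt T)$.

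The main obstacle is the final step: ruling out cancellation between $|V_1(T)|$ and $U(T)$ for arbitrary $T$. The danger is a long matching visit during which the algorithm accrues positive utility while $|V_1|$ shrinks, potentially eroding the accumulated regret. The monotonicity argument resolves this, but it hinges on the precise alignment between the sign of $V_1$ and the region currently being visited, so the bulk of the careful work is verifying the entry conditions of each region (that the orbit enters $\{z_1\ge1\}$ exactly at $z_1=1$ and $\{z_1\le0\}$ exactly at $z_1=0$) which force every per-step regret increment to be nonnegative.
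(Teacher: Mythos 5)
Your proposal is correct and follows essentially the same route as the paper: both proofs track the orbit exactly, exploiting that $\eta=1$ with integer initial data keeps the (transformed) payoff vector on a lattice so that play is always pure, that the orbit cycles through the four quadrants with the $m$-th visit lasting exactly $m$ iterations (the paper's parametrization $t=\tfrac{n(n+1)}{2}+k$ in Lemmas \ref{lem:ExactPosition} and \ref{lem:ExactUtility}), and hence that $T=\Theta(M^2)$ while the cumulative utility alternates with magnitude $\Theta(M)$ and the best fixed action earns $|y_{11}^{T+1}-1|$. The only difference is the endgame: where the paper finishes with an exhaustive mod-$4$ closed-form computation of the regret at every iteration, you verify that the per-step regret increment always lies in $\{0,2\}$ and evaluate at the ends of even-indexed visits, which is a slightly cleaner way to cover all horizons $T$ and whose case check is sound (the increment is $2$ exactly once per lap, when $z_1$ crosses from $1$ to $0$ while player 2 plays her second strategy, recovering the paper's $\tfrac{n}{2}+O(1)$).
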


The proof follows similarly to the proof of Theorem \ref{thm:regret} by exactly computing the regret in every iteration of (\ref{eqn:GD}). 
The full details appear in Section \ref{sec:Lower}.

\section{Related Work}

The study of learning dynamics in game theory has a long history dating back work of \cite{Brown1951} and \cite{Robinson1951} on fictitious play in zero-sum games, which  followed shortly after 
von Neumann's seminal work on zero-sum games (\cite{Neumann1928,Neumann1944}).
Some good reference books are the following: \cite{Fudenberg98,Cesa06,young2004strategic}.
 The classic results about time-average convergence of no-regret dynamics have been successfully generalized to include multiplayer extensions of network constant-sum games by
\cite{dask09,Cai,cai2016zero}.

 \textit{Non-equilibrating dynamics in algorithmic game theory.}
In recent years the algorithmic game theory community has produced several  interesting non-equilibrium results. These proofs are typically based on ad-hoc techniques, and results in this area typically revolve around specific examples of games with a handful of agents and strategies.  
\cite{daskalakis10} show that multiplicative weights update (MWU) does not converge even in a time-average sense in the case of a specific 3x3 game.
 \cite{paperics11} establish non-convergence for a continuous-time variant of MWU, known as the replicator dynamic, for a 2x2x2 game and show that as a result the system social welfare converges to states that dominate all Nash equilibria. 
  \cite{palaiopanos2017multiplicative,Thip18} prove the existence of Li-Yorke chaos in MWU dynamics of 2x2 potential games. Our result add a new chapter in this area with new detailed understanding of the non-equilibrium trajectories of gradient descent in two-by-two zero-sum games and their implications to regret.
 
 \textit{Connections to continuous time dynamics in game theory.} 
 From the perspective of evolutionary game theory, which typically studies continuous time dynamics, numerous nonconvergence results are known but again typically for small games, e.g., \cite{sandholm10}. 
  \cite{piliouras2014optimization} and \cite{PiliourasAAMAS2014} show that replicator dynamics, the continuous time version of MWU exhibit a specific type of near periodic behavior, which is known as Poincar\'{e} recurrence.  
  Recently, \cite{GeorgiosSODA18} show  how to generalize this recurrent behavior for replicator to more general continuous time variants of FTRL dynamics.  \cite{2017arXiv171011249M} show that these arguments can also be adapted in the case of dynamically evolving games.
   Cycles arise also in team competition (\cite{DBLP:journals/corr/abs-1711-06879}) as well as in network competition (\cite{nagarajan2018three}). 
  The papers in this category combine  delicate arguments  such as volume preservation and the existence of  constants of motions (``energy preservation") for the dynamics to establish cyclic behavior.
   In the case of discrete time dynamics, such as the multiplicative weights or gradient descent, the system trajectories are first order approximations of the above motion  and these conservation arguments are no longer valid. Instead as we have seen in this paper the ``energy" is not preserved but increases over time at a predictable rate that allows us to prove tight bounds on the regret. 
Finally, \cite{Entropy18} have put forward a program for linking game theory and topology of dynamical systems. 

 \textit{Fast regret minimization in games.} 
 It is widely known that 
 the  ``black-box" average regret rate of $O(1/\sqrt{t})$  it is achieved by MWU with suitably shrinking step size without making any assumptions about its environment.
Recently, several authors have focused instead on obtaining stronger regret guarantees for systems of learning algorithms in games.
 \cite{Daskalakis:2011:NNA:2133036.2133057} and   \cite{rakhlin2013optimization} develop no-regret dynamics with a $O(\log t/t)$ regret minimization rate when played against each other in two-player zero-sum games.
\cite{Syrgkanis:2015:FCR:2969442.2969573} further analyze a recency biased variant of FTRL in more general  games and showed a $O(t^{-3/4})$ regret minimization rate.
The social welfare converges at a rate of $O(t^{-1})$, a result which was extended to standard versions of FRTL dynamics by \cite{foster2016learning}. 

\textit{Learning in zero-sum games and applications to Artificial Intelligence.}
 A stream of recent papers proves positive results about convergence to equilibria in (mostly bilinear) zero-sum games for suitable adapted variants of first-order methods and then apply these techniques to Generative Adversarial Networks (GANs), showing improved performance (e.g. \cite{daskalakis2017training}) 
  \cite{2018arXiv180205642B} exploit conservation laws of learning dynamics in zero-sum games (e.g. \cite{piliouras2014optimization,GeorgiosSODA18}) to develop new algorithms for training GANs that add a new component to the dynamic that aims at minimizing this energy function. Different energy shrinking techniques for convergence in GANs (non-convex saddle point problems) exploit connections to variational inequalities and employ mirror descent techniques with an extra gradient step (\cite{2018arXiv180702629M}). 
 Game theoretic inspired techniques such as time-averaging seem to work well in practice for a wide range of architectures (\cite{2018arXiv180604498Y}). 
 
 Finally, the emergence of cycles in zero-sum competition lies at the core of some of the most exciting problems in creating artificial agents  for complex environments such as Starcraft, where even evaluating the strength of an individual agent is a non-trivial task (\cite{DBLP:journals/corr/abs-1806-02643}). Recent  approaches are inspired by the emergence of cyclic behavior to introduce algorithms that aim at game-theoretic niching (\cite{balduzzi2019open}). 

\section{Conclusion}

We present the first, to our knowledge, proof of sublinear regret for the most classic FTRL dynamic, online gradient descent, in two-by-two zero-sum games.
 Our proof techniques leverage geometric information and hinge upon the fact that FTRL dynamics, although are typically referred to as ``converging" 
 to Nash equilibria in zero-sum games, diverge away from them. We strongly believe that these techniques, which we are just introducing, are far 
 from being fully mined. Although several novel ideas will be required, we are fairly confident that these sublinear regret bounds carry over to much more
 general classes of FTRL dynamics as well as to large (zero-sum) games. 
\section*{Acknowledgements}

James P. Bailey and Georgios Piliouras acknowledge SUTD grant SRG ESD 2015 097, MOE AcRF Tier 2 Grant 2016-T2-1-170,  grant PIE-SGP-AI-2018-01 and NRF 2018 Fellowship NRF-NRFF2018-07.
\bibliographystyle{acm}
\bibliography{IEEEabrv,Bibliography,refer}
\newpage
\appendix
\section{First Order Approximation of (\ref{eqn:contFTRL})}\label{app:approx}
\begin{lem}\label{lem:approx}
	(\ref{eqn:FTRL}) is the first order approximation of (\ref{eqn:contFTRL}).
\end{lem}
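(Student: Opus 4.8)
The plan is to exhibit (\ref{eqn:FTRL}) as the forward Euler discretization, with unit step size, of the integral dynamics (\ref{eqn:contFTRL}). First I would observe that the two systems share an \emph{identical} map from payoff vector to strategy: in both the continuous and discrete settings $x_i$ is the maximizer of $y_i\cdot x_i - h_i(x_i)/\eta$ over the simplex, i.e. $x_i = \nabla h_i^*(y_i)$ via the same convex conjugate used throughout Section~\ref{sec:dual}. Hence the only place the two dynamics can differ is in how the payoff vectors $y_i$ are accumulated, and it suffices to compare the payoff updates.

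Next I would place the continuous payoff equations in differential form. Differentiating the integral expressions for $y_1(T)$ and $y_2(T)$ with respect to the horizon gives
\begin{align}
\dot y_1(t) &= A x_2(t), & \dot y_2(t) &= -A^\intercal x_1(t),
\end{align}
which are exactly the vector fields whose time integrals define (\ref{eqn:contFTRL}). I would then take the first-order Taylor expansion about $t$ with increment $\Delta t=1$,
\begin{align}
y_1(t+1) &= y_1(t) + \dot y_1(t) + R_1 = y_1(t) + A x_2(t) + R_1,
\end{align}
with $R_1$ the second-order remainder, and symmetrically for $y_2$. Dropping $R_1$ yields the one-step recurrence $y_1^{t+1}=y_1^t + A x_2^t$ and $y_2^{t+1}=y_2^t - A^\intercal x_1^t$, which is precisely (\ref{eqn:FTRL}); summing from the initial condition reproduces the stated closed-form sums $y_1^T=y_1^0+\sum_{t=1}^{T-1}Ax_2^t$ and its counterpart. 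Equivalently, this is the statement that the left Riemann sum with unit mesh is the first-order approximation of $\int_0^T A x_2(t)\,dt$.

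The delicate point, and the step I expect to require the most care, is justifying the Taylor expansion given that $x_i(t)$ — and hence the vector field $A x_2(t)$ — need not be differentiable in $t$ everywhere: the maximizer $\nabla h_i^*$ has kinks precisely when the active support set $S_i$ of Algorithm~\ref{alg:search} changes, i.e. when a strategy reaches the boundary of the simplex. I would handle this by noting that $x_i(\cdot)=\nabla h_i^*(y_i(\cdot))$ is Lipschitz in $t$: the regularizer $h_i$ is strongly convex, so $\nabla h_i^*$ is Lipschitz, and $y_i(t)$ is itself Lipschitz in $t$ since the payoffs $A x_2(t)$ are bounded. The second-order remainder is therefore controlled on each of the finitely many smooth pieces, and across breakpoints the one-sided expansions agree to first order. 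This confirms that (\ref{eqn:FTRL}) is exactly the first-order approximation of (\ref{eqn:contFTRL}), with the discretization error entering only at second order — consistent with the observation in Section~\ref{sec:dual} that the energy conserved by the continuous flow becomes merely non-decreasing under the discrete update.
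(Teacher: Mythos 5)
Your proposal is correct and follows essentially the same route as the paper: a first-order Taylor (forward Euler) expansion of the payoff vector with unit step, the observation that the strategy map $y_i \mapsto \argmax\{y_i\cdot x_i - h_i(x_i)/\eta\}$ is identical in both systems, and an induction to recover the discrete recurrence. Your extra paragraph controlling the Taylor remainder across the kinks of $\nabla h_i^*$ is more careful than the paper's one-line inductive argument, but it is the same underlying approach.
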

\begin{proof}The first order approximation of $y_1(t)$ is 
	\begin{align}
		\hat{y}_{1}(t)	&= 	{y}_1(t-1)+{\frac{d}{dt} y_1(t-1)}\\
						&= 	{y}_1(t-1)+A{x}_1(t-1)
	\end{align}
	and 
	\begin{align}
		\hat{x}_1(t)	&= \argmax_{x_1\in {\cal X}_1}\left\{x\cdot \hat{y}_1(t) - \frac{h_1(x_1)}{\eta}\right\}
	\end{align}
	Inductively, $\hat{y}_{1}(t)=y_1^t$  and $\hat{x}_1(t)=x_1^t$  as defined in (\ref{eqn:FTRL}) completing the proof of the lemma.
\end{proof}

\section{Optimal Solution to (\ref{eqn:GD})}\label{app:kkt}
The KKT optimality conditions (see \cite{bertsekas1999nonlinear}) for (\ref{eqn:GD}) are given by
\begin{align}
x_i^t&= \eta\left( y^t_i-\lambda_i^t\cdot \mathbf{1} + u_i^t\right)  \tag{Critical Point}\label{eqn:KKT1}\\
x_i^t&\geq 0 \tag{Non-negativity} \\
\sum_{j=1}^{n_i} x_{ij}^t &=1 \tag{Primal Feasibility}\label{eqn:KKT3}\\
u_i^t& \geq 0 \tag{Dual Feasibility}\\
u_i^t\cdot x_i^t &= 0 \tag{Complimentary Slackness}\label{eqn:CS} 
\end{align}
where $u_i^t\in \mathbb{R}^{n_i}$ and $\lambda_i^t\in \mathbb{R}$. 

Let $S_i$ be the set of $j$ where $u_{ij}^t=0$. By (\ref{eqn:CS}), $x_{ij}^t=0$ for all $j \notin S_i$. Therefore, (\ref{eqn:KKT1}) becomes
\begin{align}
x_{ij}^t&= 
\begin{cases} 
0 		& \mbox{for } j \notin S_i \\
\eta(y_{ij}^t-\lambda_i^t) & \mbox{for } j \in S_i
\end{cases}.\label{eqn:cases}
\end{align}  
Substituting (\ref{eqn:cases}) into (\ref{eqn:KKT3}) yields
\begin{align}
1	&= \sum_{j=1}^{n_i} x_{ij}^t\\
&= \sum_{j\in S_i} \eta(y_{ij}^t-\lambda_i^t)
\end{align}
and $\lambda_i^t= \sum_{j\in S_i} y_{ij}^t/|S_i|-1/(\eta |S_i|)$.  Therefore
\begin{align}
x_{ij}^t&= 
\begin{cases} 
0 		& \mbox{for } j \notin S_i \\
\eta\left(y_{ij}^t-\sum_{k\in S_i} \frac{y_{ik}^t}{|S_i|}\right) +\frac{1}{|S_i|}& \mbox{for } j \in S_i
\end{cases}.\label{eqn:solution2}
\end{align}  
 
 The variable $u_{ij}^t=0$ represents that the constraint $x_{ij}^t$ is unenforced.
 Enforcing constraints never improves the objective value of an optimization problem and therefore $S_i\subseteq [n_i]$ is a maximal set where (\ref{eqn:solution2}) is feasible.  
 Moreover, it is straightforward to show that if $y_{ij}^t\geq y_{ik}^t$ then $x_{ij}^t\geq x_{ik}^t$. 
 Thus, greedily removing the lowest valued $y_{ij}^t$ from $\hat{S}_i=[n_i]$ until (\ref{eqn:solution2}) is feasible yields the optimal solution to (\ref{eqn:GD}). 
\section{Payoff Matrix Assumptions}\label{app:singularity}
The payoff matrix is in the form 
\begin{align}
A=\left[\begin{array}{c c}
a & b \\
c & d \\ 
\end{array}\right]
\end{align}
In this paper, we make three assumptions about $A$: $ad-bc=0$, $a>\max\{0,b,c\}$ and $d>\max\{0,b,c\}$. In order, we show that we may make these assumption without loss of generality.

In 2x2 games, if there is a unique fully mixed Nash equilibrium, then it is straight forward to show that player 2's equilibrium is 
\begin{align}
x^{NE}_2=\left(\frac{d-b}{a+d-b-c}, \frac{a-c}{a+d-b-c} \right)\label{eqn:NashCond}
\end{align}
and therefore $a+d-b-c\neq 0$, $d\neq b$ and $a\neq c$ when there is a unique fully mixed Nash equilibrium. Similarly, by analyzing player 1's Nash equilibrium, $d\neq c$ and $a\neq b$. Now consider the payoff matrix
\begin{align}
B=\left[\begin{array}{c c}
a+\frac{ad-bc}{a+d-b-c} & b+\frac{ad-bc}{a+d-b-c} \\
c+\frac{ad-bc}{a+d-b-c} & d+\frac{ad-bc}{a+d-b-c} \\ 
\end{array}\right]
\end{align}
The determinant of payoff matrix $B$ is zero.  Moreover, (\ref{eqn:FTRL}) is invariant to shifts in the payoff matrix, so for the purpose of the dynamics $\{x^t\}_{t=1}^\infty$, $A$ and $B$ are equivalent matrices. Thus, without loss of generality we may assume the payoff matrix is singular by shifting the matrix by a specific constant.

Next, we argue that we may assume $a>0$. Players 1 and 2 separately try to solve
\begin{align}\max_{x_1\in {\cal X}_1}\min_{x_2\in {\cal X}_2}x_1\cdot 
				\left[\begin{array}{c c}
				a & b \\
				c & d \\ 
				\end{array}\right]x_2
				&=\phantom{-}\max_{x_1\in {\cal X}_1}\min_{x_2\in {\cal X}_2}-x_1\cdot 
					\left[\begin{array}{c c}
					-a & -b \\
					-c & -d \\ 
					\end{array}\right]x_2\\
				&= -\max_{x_2\in {\cal X}_2}\min_{x_1\in {\cal X}_1}\phantom{-}x_2\cdot 
				\left[\begin{array}{c c}
				-a & -c \\
				-b & -d \\ 
				\end{array}\right]x_1.
\end{align}
Thus, by possibly switching the maximization and minimization roles between player 1 and player 2, we may assume $a>0$.

Next we show that we may assume $a>\max\{b,c\}$. If $a+d-b-c>0$ then (\ref{eqn:NashCond}) implies $a>c$ and, symmetrically, $a>b$ completing the claim.  
If instead, $a+d-b-c<0$, then through identical reasoning, $\min\{b,c\}>a>0$ and we can simply rewrite the payoff matrix as
\begin{align}\max_{x_1\in {\cal X}_1}\min_{x_2\in {\cal X}_2}x_1\cdot 
\left[\begin{array}{c c}
a & b \\
c & d \\ 
\end{array}\right]x_2
&=\max_{x_1\in {\cal X}_1}\min_{x_2\in {\cal X}_2}x_1\cdot 
\left[\begin{array}{c c}
b & a \\
d & c \\ 
\end{array}\right]x_2
\end{align}
With the new payoff matrix, $b+c-a-d>0$ implying $b>\max\{a,d\}\geq 0$ as desired.  Thus, we may assume $a>\max\{0,b,c\}$ by relabeling player 1's strategies. 

Finally, $ad-bc=0$ and $a>\max\{0,b,c\}$ implies $d>\max\{0,b,c\}$.  
The prior analysis argues  $a+d-b-c>0$.
Thus,  (\ref{eqn:NashCond}) implies $d>\max\{b,c\}$. 
Now for contradiction, suppose $d<0$.  This implies $0>d>\max\{b,c\}$ and $ad-bc<0$ a contradiction.  Therefore $d>\max\{0,b,c\}$. 
\section{Expressing the Convex Conjugate with the Transformed Payoffs}\label{app:dualapp}
We can express $x$ as
\begin{align}
x_{11}^t	&=\begin{cases}
				0 & \mbox{if } \eta\left(1-\frac{c-d}{a-b}\right)\frac{y_{11}^t}{2}+\frac{1}{2}\leq0\\
				1 & \mbox{if } \eta\left(1-\frac{c-d}{a-b}\right)\frac{y_{11}^t}{2}+\frac{1}{2}\geq1\\
				\eta\left(1-\frac{c-d}{a-b}\right)\frac{y_{11}^t}{2}+\frac{1}{2}			& \mbox{otherwise}
\end{cases}\\
x_{21}^t	&=\begin{cases}
				0 & \mbox{if } \eta\left(1-\frac{b-d}{a-c}\right)\frac{y_{21}^t}{2}+\frac{1}{2}\leq0\\
				1 & \mbox{if } \eta\left(1-\frac{b-d}{a-c}\right)\frac{y_{21}^t}{2}+\frac{1}{2}\geq1\\
				\eta\left(1-\frac{b-d}{a-c}\right)\frac{y_{21}^t}{2}+\frac{1}{2}			& \mbox{otherwise}.
			\end{cases}.
\end{align} 
Thus, (\ref{eqn:notconvex}) simplifies to
\begin{align}
	h^*_{1}(y_{1}^t)	&=	\begin{cases}
						y_{12}^t-\frac{1}{2\eta} & \mbox{if } x_{11}^t=0\\
						y_{11}^t-\frac{1}{2\eta} & \mbox{if } x_{11}^t=1\\
						\frac{\eta}{4}\left(y_{11}^t-y_{12}^t\right)^2+\frac{y_{11}^t+y_{12}^t}{\eta}-\frac{1}{4\eta}	\hspace{.32in}\phantom{hi}		& \mbox{otherwise}
					\end{cases}\\
					&=\begin{cases}
						\frac{c-d}{a-b}y_{11}^t-\frac{1}{2\eta} & \mbox{if } x_{11}^t=0\\
						y_{11}^t-\frac{1}{2\eta} & \mbox{if } x_{11}^t=1\\
						\frac{\eta}{4}\left(1-\frac{c-d}{a-b}\right)^2\left(y_{11}^t\right)^2+\frac{\left(1-\frac{c-d}{a-b}\right)y_{11}^t}{\eta}-\frac{1}{4\eta}			& \mbox{otherwise}
					\end{cases}
\end{align}
Symmetrically, 
\begin{align}
h^*_{2}(y_{2}^t)&= \begin{cases}
\frac{b-d}{a-c}y_{21}^t-\frac{1}{2\eta} & \mbox{if } x_{21}^t=0\\
y_{21}^t-\frac{1}{2\eta} & \mbox{if } x_{21}^t=1\\
\frac{\eta}{4}\left(1-\frac{b-d}{a-c}\right)^2\left(y_{21}^t\right)^2+\frac{\left(1-\frac{b-d}{a-c}\right)y_{21}^t}{\eta}-\frac{1}{4\eta}			& \mbox{otherwise}\\
\end{cases}
\end{align}
Unlike (\ref{eqn:notconvex}), we can easily verify $h^*$ is strongly convex when the strategy is fully mixed. In addition to allowing for a simpler analysis, 
\begin{align}
h^*_{1}(y_{1}^t)&=\begin{cases}
\frac{c-d}{a-b}y_{11}^t-\frac{1}{2\eta} & \mbox{if } x_{11}^t=0\\
y_{11}^t-\frac{1}{2\eta} & \mbox{if } x_{11}^t=1\\
\frac{\eta}{4}\left(1-\frac{c-d}{a-b}\right)^2\left(y_{11}^t\right)^2+\frac{\left(1-\frac{c-d}{a-b}\right)y_{11}^t}{\eta}-\frac{1}{4\eta}		& \mbox{otherwise}\\
\end{cases}\\
&=\begin{cases}
\alpha_{10}z_1^t-\beta_{10} & \mbox{if } z_{1}^t\leq0\\
\alpha_{11}z_1^t-\beta_{11} & \mbox{if } z_{1}^t\geq1\\
\gamma_1(z_1^t)^2 + \alpha_{1}z_1^t-\beta_{1}			& \mbox{otherwise}\\
\end{cases}\\
&= \bar{h}^*_1(z_1^t)
\end{align}
where $\alpha_{10}< 0, \alpha_{11}> 0,$ and $\gamma_1>0$.  Symmetrically,
\begin{align}
\bar{h}^*_2(z_{2}^t)
&=\begin{cases}
\alpha_{20}z_2^t-\beta_{20} & \mbox{if } z_{2}^t\leq0\\
\alpha_{21}z_2^t-\beta_{21} & \mbox{if } z_{2}^t\geq1\\
\gamma_2(z_2^t)^2 + \alpha_{2}z_2^t-\beta_{2}			& \mbox{otherwise}\\
\end{cases}
\end{align}
 with $\alpha_{20}< 0, \alpha_{21}> 0,$ and $\gamma_2>0$.
\section{Details of Theorem \ref{thm:regret}}
\subsection{Partitioning the Strategies and the Dual Space}\label{sec:partition}
	By assumption $a>\min\{0,b,c\}$ and $d>\min\{0,b,c\}$ (See Appendix \ref{app:singularity}).  
	This implies that both the strategies ($x_{11}$, $x_{21}$) and the transformed payoff vector $z$ will rotate clockwise about the Nash equilibrium in both continuous and discrete time as depicted in Figure \ref{fig:discrete}. 
	To formally show clockwise movement, assume $x^t_{11}\geq x^{NE}_{11},x^t_{21}\geq x^{NE}_{21}$ (upper right of the Nash equilibrium).  
	Then $x_1^t\cdot Ax_2^t\leq (1,0)\cdot Ax_2^t$ implying $x_{11}^t\leq x_{11}^{t+1}$.  
	Symmetrically, $x_{21}^t\geq x_{22}^{t+1}$ implying that if $x^t_{11}\geq x^{NE}_{11}$ and $x^t_{21}\geq x^{NE}_{21}$ then the strategies move clockwise or not at all. Similarly, clockwise movement can be shown for the other three cases.  
	A symmetric argument shows the transformed payoff vector $z$ also rotates clockwise.

To partition the strategies $\{x^t\}_{t=1}^T$, we begin by first partitioning the dual space $X^*$ into $4$ regions $Z_0,Z_1,Z_2,$ and $Z_3$. The visual representation of this partitioning is given in Figure \ref{fig:partition}.
\begin{align*}
\color{black}Z_0&\color{black}=\left\{z: z_1< 1,z_2\geq 1\right\}.\\
\color{black}Z_1&\color{black}=\left\{z: z_1\geq 1,z_2> 0\right\}.\\
\color{black}Z_2&\color{black}=\left\{z: z_1> 0,z_2\leq 0\right\}.\\
\color{black}Z_3&\color{black}=\left\{z: z_1\leq  0,z_2< 1\right\}.
\end{align*}

The partitioning $Z_0$, $Z_1$, $Z_2$, and $Z_3$, is not a proper partitioning.  As depicted in Figure \ref{fig:partition}, it lacks all payoff vectors that correspond to fully mixed strategies for both players.  However, by Theorem \ref{thm:boundary}, there exists a $B$ so that $x^t$ is not fully mixed for both players for all $t\geq B$.  Since $B$ is finite, the first $B$ strategies will shift the total regret by at most a constant and therefore can be disregarded in our analysis. 
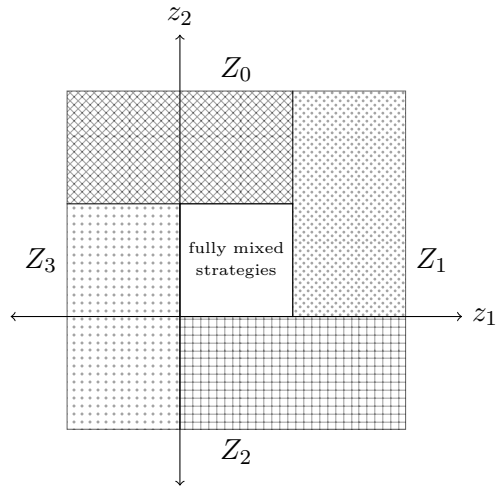
\begin{figure}[h]
\begin{center}
	\begin{tikzpicture}[scale=1.5]
	
	\def\shift{1}
	\def\axisshift{.5}
	\def\ETA{.5}
	
	\draw[<->] (0,0-\shift-\axisshift)--(0,1+\shift+\axisshift);
	\draw[<->] (0-\shift-\axisshift,0)--(1+\shift+\axisshift,0);
	
	\node[right] at (1+\shift+\axisshift,0) {$z_1$};
	\node[above] at (0,1+\shift+\axisshift) {$z_2$};
	
	\node at (.5,.6) {\tiny fully mixed};
	\node at (.5,.4) {\tiny strategies};
	
	\draw (0,0) rectangle (1,1);

	\foreach \r in {}{
		\draw[domain=-.5/\ETA:.5/\ETA,smooth,variable=\x,blue] plot ({\ETA*\x+.5}, {.5+\ETA*(\r+3/(4*\ETA)-\ETA*\x*\x)});
		\draw[blue] (0,-\r/2)--(-\r/2,0);
		\draw[blue] (1,-\r/2)--(1+\r/2,0);
		\draw[blue] (0,1+\r/2)--(-\r/2,1);
		\draw[blue] (1,1+\r/2)--(1+\r/2,1);
		\draw[domain=-.5/\ETA:.5/\ETA,smooth,variable=\x,blue] plot ({\ETA*\x+.5}, 	{.5-\ETA*((\r+3/(4*\ETA)-\ETA*\x*\x)});
		\draw[domain=-.5/\ETA:.5/\ETA,smooth,variable=\x,blue] plot ({.5+\ETA*(\r+3/(4*\ETA)-\ETA*\x*\x)},{\ETA*\x+.5});
		\draw[domain=-.5/\ETA:.5/\ETA,smooth,variable=\x,blue] plot ({.5-\ETA*(\r+3/(4*\ETA)-\ETA*\x*\x)},{\ETA*\x+.5});
	}
	\draw[pattern=crosshatch, pattern color=black,opacity=.6] (-\shift,1)--(1,1)--(1,1+\shift)--(-\shift,1+\shift)--cycle;
	\node[black,above] at (.5,1+\shift) {$Z_0$};
	\draw[pattern=crosshatch dots, pattern color=black,opacity=.6] (1,0)--(1+\shift,0)--(1+\shift,1+\shift)--(1,1+\shift)--cycle;
	\node[black,right] at (1+\shift,.5) {$Z_1$};	
	\draw[pattern=grid, pattern color=black,opacity=.6] (0,0)--(1+\shift,0)--(1+\shift,-\shift)--(0,0-\shift)--cycle;
	\node[black,below] at (.5,-\shift) {$Z_2$};
	\draw[pattern=dots, pattern color=black,opacity=.6] (0,1)--(-\shift,1)--(-\shift,-\shift)--(0,-\shift)--cycle;
	\node[black,left] at (-\shift,.5) {$Z_3$};
	\end{tikzpicture}\caption{Visual Representation of $Z_0, Z_1, Z_2$ and $Z_3$.}\label{fig:partition}
\end{center}
\end{figure}

Since strategies move clockwise, in general the payoff vectors will move from region $Z_{i}$ to region $Z_{(i+1 \mod 4)}$.  If $\eta$ is large, then it is possible to move directly from $Z_i$ to $Z_{(i +2 \mod 4)}$.  While we consider such $\eta$ impractical, our analysis handles such cases and shows that after enough iterations, the payoff vectors never skip a region. Finally, we are able to define our partitioning over $\{x_t\}_{t=1}^T$. Let $B$ be as in the statement of Theorem \ref{thm:boundary} and let $Z(t)\in \{Z_0,Z_1,Z_2,Z_3\}$ be such that $z^t\in Z(t)$. 
\begin{align}
t_0&=\argmin_{t\geq B}\{z^t\in Z_0 \}\\
t_j&=\argmin_{t\geq t_{j-1}}\{z^{t}\notin Z({t-1})\} \ \forall j=1,2,...\label{eqn:dumb}
\end{align}
Finally, let $t_k=T+1$ where $k-1$ is the largest index that has a solution in (\ref{eqn:dumb}). 
The value $t_j$ represents the first time after $t_{j-1}$ that $z^t$ enters a new region. Our analysis now focuses on the time intervals created by these break points.  Specifically, we analyze $x^{t_j}, x^{t_j+1},...,x^{t_{j+1}-1}$ and $z^{t_j}, z^{t_j+1},...,z^{t_{j+1}-1}$

\subsection{Player Strategies Often Do Not Change}\label{sec:constant}
	In this section, we show that for each partitioning $\{t_j,...,t_{j+1}-1\}$ the strategies change at most a constant, $\kappa$, of times independent of the size of the partitioning, $t_{j+1}-t_j$. 
	This result is useful in two areas. 
	First, in the proof of Theorem \ref{thm:regret} it is used to show that $x^{t_j},...,x^{t_{j+1}-1}$ contributes to the regret by an amount proportional to $\kappa$. 
	Second, it is used in the proof of Lemma \ref{lem:LinearRadius} to show the total energy in the system increases by a constant in each partition;
	we show the energy only increases when the player strategies change and therefore, the energy increases at most $\kappa$ times in each partition. 

	\begin{lem}\label{lem:constant}
		There exists a $\kappa$ such that $|\{t\in \{t_j,..., t_{j+1}-1\}: x^{t}\neq x^{t+1}\}|\leq \kappa$ for all $j$. 
	\end{lem}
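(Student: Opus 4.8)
The plan is to leverage the projection characterization from Section~\ref{sec:dual}, namely that $x_{i1}^t$ equals $z_i^t$ clamped to $[0,1]$: it is $0$ for $z_i^t\le 0$, equals $z_i^t$ for $z_i^t\in(0,1)$, and is $1$ for $z_i^t\ge 1$. Consequently the profile can change, $x^t\neq x^{t+1}$, only when some coordinate $z_i^t$ sits in the active window $(0,1)$, since on $(-\infty,0]$ and on $[1,\infty)$ the map $z_i\mapsto x_{i1}$ is constant. I would first note that, by the definition of the break points in \eqref{eqn:dumb}, the entire block $z^{t_j},\dots,z^{t_{j+1}-1}$ lies in a single region $Z(t_j)\in\{Z_0,Z_1,Z_2,Z_3\}$. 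Reading off the region definitions, in each region one player is pinned to a pure strategy for the whole block: e.g. in $Z_0$ we have $z_2\ge 1$, so $x_{21}^t=1$ throughout, and symmetrically $Z_1$ pins player $1$ at $x_{11}=1$, $Z_2$ pins player $2$ at $x_{21}=0$, and $Z_3$ pins player $1$ at $x_{11}=0$. That pinned player contributes no strategy changes within the block.

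It remains to bound the changes of the non-pinned player. Here the key observation is that the payoff increment of the non-pinned player is governed by the opponent's strategy, which is pinned and therefore constant: for instance in $Z_0$ we have $\Delta y_{11}^t=(a-b)x_{21}^t+b=a$, so after the linear change of variables of Section~\ref{sec:dual} the coordinate $z_1^t$ advances by a fixed step $\delta$ every iteration. The clockwise rotation established in Section~\ref{sec:partition} guarantees that this motion is monotone and that $\delta\neq0$ away from equilibrium (otherwise the block could never leave its region). Since $z_1^t$ is monotone with constant increment $|\delta|$ and player $1$'s strategy can change only at iterations where $z_1^t\in(0,1)$ or $z_1^{t+1}\in(0,1)$, the number of iterations of the block that meet the unit-length window, and hence the number of strategy changes, is at most $\lceil 1/|\delta|\rceil+1$.

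Finally I would take $\kappa$ to be the maximum, over the four regions, of the per-region bounds $\lceil 1/|\delta_r|\rceil+1$, where $\delta_r$ is the region-dependent constant step; since $\delta_r$ depends only on the region and the fixed data $A,\eta$ (not on the index $j$), this $\kappa$ is uniform in $j$, which is exactly the claim. The step I expect to require the most care is verifying, uniformly across the four (clockwise-symmetric) regions, the dichotomy ``one player pinned, the opponent advancing by a nonzero constant step'': the pinning follows directly from the region inequalities, the constancy of the step follows because the increment is a linear function of the pinned opponent's strategy, and nonvanishing of $\delta$ follows from the sign assumptions $a>\max\{0,b,c\}$, $d>\max\{0,b,c\}$, $a+d-b-c>0$ of Appendix~\ref{app:singularity} together with the rotation argument. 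A harmless bookkeeping point is that the non-pinned coordinate may itself be clamped at an endpoint for part of the block; this only reduces the count and does not affect the bound.
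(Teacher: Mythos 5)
Your proposal is correct and follows essentially the same route as the paper's proof: fix the region containing the block, observe that one player is pinned to a pure strategy there so the other player's transformed payoff coordinate moves by a fixed nonzero step $\delta$ per iteration, and conclude that at most $\lceil 1/\delta\rceil+O(1)$ iterations can have that coordinate in the unit window where strategies change, taking $\kappa$ as the maximum over the four regions.
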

		\begin{proof}[Proof of Lemma \ref{lem:constant}]
		Without loss of generality, assume $z^{t_{j}},...,z^{t_{j+1}-1}\in Z_1$. 
		This implies $x_{11}^{t_{j}}=...=x_{11}^{t_{j+1}-1}=1$ and therefore
		\begin{align}
		y_{21}^{t+1}-y_{21}^t	&= [-a,-c]\cdot [x_{11}^t,1-x_{11}^t]\\	
		&= -a
		\end{align}
		for all $t=t_j, ..., t_{j+1}-1$.
		Thus, there must exist a constant $\delta_1>0$ such that $z_2^{t}-z_2^{t+1}=\delta_1$.
		
		By selection of $Z_1$, $z_2^t>0$ for all $t$.  
		Moreover, $x_{21}^t=1$ if $z_2^t\geq 1$. 
		Since $z_2^t$ decreases by $\delta_1$ in each iteration, $x_{21}^t\neq x_{21}^{t+1}$ iff $z_2^{t+1}<1$.  
		However, since $z_2^t-z_2^{t+1}=\delta_1$, there can only be at most $\kappa_1= \lceil 1/\delta_1 \rceil$ such $t$.  For regions $Z_0$, $Z_2$, and $Z_3$, there exist similar $\kappa_0,\kappa_2$, and $\kappa_3$. Taking $\kappa=\max\{\kappa_0,\kappa_1,\kappa_2,\kappa_3\}$ completes the proof of the lemma. 
	\end{proof}
\subsection{Energy Increases by $\Theta(1)$ in Each Partition}\label{sec:energy}
Next, we show the energy in the system increases by a constant each time $z^t$ moves into a new partition. 
Again, we use this result in two places. 
First, we use the result in the proof of Lemma \ref{lem:LinearSteps}, to show that $z^t$ moves from $Z_i$ directly to $Z_{({i+2}\mod 4)}$ at most a constant number of times. 
Second, we use the result in combination with Lemma \ref{lem:LinearSteps} to show $t_j\in \Theta(j^2)$ allowing us to conclude that $k\in \Theta(\sqrt{T})$ partitions are visited in $T$ iterations.

\begin{lem}\label{lem:LinearRadius}
	$r_{j+1}-r_j\in \Theta(1)$.
\end{lem}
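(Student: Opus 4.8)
The plan is to track the energy $r^t=\bar h_1^*(z_1^t)+\bar h_2^*(z_2^t)=h_1^*(y_1^t)+h_2^*(y_2^t)$ step by step and show the net gain across one partition is simultaneously bounded above and below by positive constants. I would start from the exact one-step decomposition, most cleanly written in the $y$-coordinates where $\nabla_{y_i}h_i^*=x_i^t$:
\begin{align*}
r^{t+1}-r^t=\Big(x_1^t\cdot Ax_2^t-x_1^t\cdot A^\intercal x_2^t\Big)+\sum_{i=1}^2 D_i\big(y_i^{t+1},y_i^t\big),
\end{align*}
where $D_i\ge 0$ is the Bregman divergence of $h_i^*$. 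The parenthesized first-order term is exactly the discrete image of the continuous-time energy derivative, which vanishes by the conservation argument of \cite{GeorgiosSODA18}; hence each step contributes only its nonnegative Bregman terms, re-deriving that energy is non-decreasing. The crucial structural fact, read off from the explicit formulas for $\bar h_i^*$, is that $\bar h_i^*$ is \emph{affine} on $\{z_i\le 0\}$ and on $\{z_i\ge 1\}$ with slopes $\alpha_{i0},\alpha_{i1}$, and strictly convex with curvature $2\gamma_i>0$ only on $(0,1)$. Thus $D_i$ is zero on any step that remains inside one pure-strategy piece, and the energy can strictly increase only on steps where some $z_i^t$ or $z_i^{t+1}$ lies in the quadratic region, i.e.\ only when a player's strategy is genuinely changing.

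For the upper bound, fix a partition $\{t_j,\dots,t_{j+1}-1\}$ and assume without loss of generality $z^t\in Z_1$ throughout, so player $1$ is pure ($z_1^t\ge 1$, $D_1=0$) and only $z_2^t$ can enter $(0,1)$. As in the proof of Lemma \ref{lem:constant}, $z_2^t$ decreases by the fixed constant $\delta_1$ each step. Since the Bregman divergence of the quadratic piece of a step of length $\delta_1$ equals $\gamma_2\delta_1^2$, and crossing into a linear piece only decreases it, every step satisfies $D_2(y_2^{t+1},y_2^t)\le \gamma_2\delta_1^2$, while the non-mixing steps contribute $0$. By Lemma \ref{lem:constant} there are at most $\kappa$ mixing steps, so $r_{j+1}-r_j\le \kappa\,\gamma_2\delta_1^2\in O(1)$.

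For the lower bound I would use that across the partition $z_2^t$ sweeps monotonically from a value $>0$ down to $\le 0$ (the partition ends exactly when $z$ enters $Z_2$), so at least one step is not contained in a single linear piece. Any such step has $D_2$ bounded below by a positive constant: if consecutive iterates both lie in $(0,1)$ then $D_2=\gamma_2\delta_1^2>0$; and if a step jumps directly from $z_2^t\ge 1$ to $z_2^{t+1}\le 0$, then continuity and $C^1$-smoothness of $\bar h_2^*$ at the breakpoints force $\beta_{21}-\beta_{20}=\gamma_2$ and $\alpha_{21}-\alpha_{20}=2\gamma_2$, whence $D_2=-2\gamma_2 z_2^{t+1}+\gamma_2\ge \gamma_2>0$. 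Taking the minimum over the finitely many (four) region types yields a uniform $c>0$ with $r_{j+1}-r_j\ge c$, and combined with the upper bound this gives $r_{j+1}-r_j\in\Theta(1)$.

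The main obstacle I anticipate is making both Bregman estimates uniform in $j$ in the presence of partial-crossing steps. A single step can begin in a linear piece and end in the quadratic piece (or, for large $\eta$, skip $(0,1)$ entirely), so $D_2$ is neither simply $\gamma_2\delta_1^2$ nor zero; the bound must be shown to depend only on the fixed parameters $\gamma_i,\alpha_{i0},\alpha_{i1},\delta_i$ and not on how far the pure coordinate sits outside $[0,1]$. Handling this cleanly requires splitting $D_2$ at the breakpoints $z_2=0,1$ and verifying the portions lying in the linear pieces cancel against the first-order term, which is precisely where the exact cancellation $x_1^t\cdot Ax_2^t-x_1^t\cdot A^\intercal x_2^t=0$ (valid verbatim after the affine $y\mapsto z$ change of variables) does the essential work.
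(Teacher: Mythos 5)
Your decomposition $r^{t+1}-r^t=\sum_{i}D_i\bigl(y_i^{t+1},y_i^t\bigr)$, with the first-order terms cancelling identically (since $x_1^t\cdot Ax_2^t=x_2^t\cdot A^{\intercal}x_1^t$), is a clean reformulation of what the paper actually does: the paper's ``$R(\delta)$ is strongly convex with parameter $2\gamma_2$'' computation is exactly the Bregman divergence of the quadratic piece of $\bar h_2^*$, and your upper bound ($D_2\le\gamma_2\delta_1^2$ on every step because $(\bar h_2^*)''\le2\gamma_2$ everywhere, with at most $\kappa$ steps having $D_2>0$ by Lemma \ref{lem:constant}) is correct and matches the paper's.

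The gap is in the lower bound. You produce a positive per-step constant only in two configurations: both iterates in $(0,1)$ (giving $\gamma_2\delta_1^2$) and a single step jumping over $(0,1)$ (giving $\ge\gamma_2$). A third configuration is possible and is not handled by ``taking the minimum over the finitely many (four) region types'': the trajectory enters and exits $(0,1)$ in two separate partial-crossing steps, so that exactly one iterate $z_2^{s}=w\in(0,1)$ occurs and no step has both endpoints in the quadratic region. Then the entering step contributes only $D_2=\gamma_2(1-w)^2$ and the exiting step only $\gamma_2 w^2$, and each of these individually tends to $0$ as $w\to1$ or $w\to0$; there is no uniform per-step lower bound, so the minimum you propose to take is zero. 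The missing idea --- which is precisely the paper's ``Cases 3 and 4 occur but Case 2 does not'' branch --- is to lower-bound the \emph{sum} of the entering and exiting contributions, $\gamma_2\bigl((1-w)^2+w^2\bigr)\ge\gamma_2/2$ uniformly in $w$. You flag the difficulty of partial-crossing steps in your last paragraph but the remedy you sketch there (splitting $D_2$ at the breakpoints and cancelling against the first-order term) addresses the upper bound, not this. Adding the combined two-step estimate closes the argument. (A much smaller omission, shared with the paper: on the final step of a partition $z_1$ may drop below $1$, so $D_1$ need not vanish there; this is harmless since $D_1\ge0$ for the lower bound and contributes at most one extra $O(1)$ term to the upper bound.)
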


The proof of Lemma \ref{lem:LinearRadius} relies on the observation that (\ref{eqn:GD}) is simply a 1st order approximation of (\ref{eqn:contGD}) as depicted in Figure \ref{fig:approx}. When neither $z_i^t\notin (0,1)$, the continuous time dynamics move in a straight line and therefore a 1st order approximation perfectly preserves the energy of the system. However, if $z_i^t\in (0,1)$ then by the strong convexity of $\bar{h}_i(z_i^t)$, the total energy of the system increases.  By Lemma \ref{lem:constant}, there are a constant number of $t$ where $z_i^t\in (0,1)$ for each partition and therefore the total energy increases by $O(1)$ in each partition. 
\begin{figure}[h]
	\begin{center}
		\begin{tikzpicture}[scale=1.25]
		\def\circlesize{1.5pt}
		\def\shift{1}
		\def\axisshift{.5}
		\draw[<->] (0,0-\shift-\axisshift)--(0,1+\shift+\axisshift);
		\draw[<->] (0-\shift-\axisshift,0)--(1+\shift+\axisshift,0);
		\node[left] at (0,1+\shift+\axisshift) {$z_2$};
		\node[below] at (1+\shift+\axisshift,0) {$z_1$};
		\draw[dashed] (0,-\shift)--(0,1+\shift);
		\draw[dashed] (1,-\shift)--(1,1+\shift);
		\draw[dashed] (-\shift,0)--(1+\shift,0);
		\draw[dashed] (-\shift,1)--(1+\shift,1);	
		\draw (0,0) rectangle (1,1);
		
		\draw[blue] (2,2)--(2.5,2);
		\node[right] at (2.5,2) {Continuous Time Dynamics};
		\fill[black] (2.25,1.7) circle[radius=\circlesize];
		\node[right] at (2.5,1.65) {Discrete Time Dynamics};
		
		\node[left] at (-1.5,2) {\phantom{Continuous Time Dynamics}};
		
		\newcommand\rad{1}
		\def\thresh{.5}
		\foreach \z in {.3}{
			\draw[domain=0:1,smooth,variable=\x,blue] plot ({\x}, {1+\rad-\x*\x+\x});
			\draw[blue] (0,-\rad)--(-\rad,0);
			\draw[blue] (1,-\rad)--(1+\rad,0);
			\draw[blue] (0,1+\rad)--(-\rad,1);
			\draw[blue] (1,1+\rad)--(1+\rad,1);
			\draw[domain=0:1,smooth,variable=\x,blue] plot ({\x}, 	{-\rad+\x*\x-\x});
			\draw[domain=0:1,smooth,variable=\x,blue] plot ({1+\rad-\x*\x+\x},{\x});
			\draw[domain=0:1,smooth,variable=\x,blue] plot ({-\rad+\x*\x-\x},{\x});
			
			\fill (\z, {1+\rad-\z*\z+\z}) circle[radius=\circlesize];
			\fill ({\z+\thresh}, {1+\rad-\z*\z+\z-(2*\z-1)*\thresh}) circle[radius=\circlesize];
			\draw (\z, {1+\rad-\z*\z+\z})--({\z+\thresh}, {1+\rad-\z*\z+\z-(2*\z-1)*\thresh});
		}
		\renewcommand\rad{1.25}
		\foreach \z in {.8}{
			\draw[domain=0:1,smooth,variable=\x,blue] plot ({\x}, {1+\rad-\x*\x+\x});
			\draw[blue] (0,-\rad)--(-\rad,0);
			\draw[blue] (1,-\rad)--(1+\rad,0);
			\draw[blue] (0,1+\rad)--(-\rad,1);
			\draw[blue] (1,1+\rad)--(1+\rad,1);
			\draw[domain=0:1,smooth,variable=\x,blue] plot ({\x}, 	{-\rad+\x*\x-\x});
			\draw[domain=0:1,smooth,variable=\x,blue] plot ({1+\rad-\x*\x+\x},{\x});
			\draw[domain=0:1,smooth,variable=\x,blue] plot ({-\rad+\x*\x-\x},{\x});
			
			\fill (\z, {1+\rad-\z*\z+\z}) circle[radius=\circlesize];
			\fill ({\z+\thresh}, {1+\rad-\z*\z+\z-(2*\z-1)*\thresh}) circle[radius=\circlesize];
			\draw (\z, {1+\rad-\z*\z+\z})--({\z+\thresh}, {1+\rad-\z*\z+\z-(2*\z-1)*\thresh});
		}
		\renewcommand\rad{.91}
		\foreach \z in {.3}{
			\draw[domain=0:1,smooth,variable=\x,blue] plot ({\x}, {1+\rad-\x*\x+\x});
			\draw[blue] (0,-\rad)--(-\rad,0);
			\draw[blue] (1,-\rad)--(1+\rad,0);
			\draw[blue] (0,1+\rad)--(-\rad,1);
			\draw[blue] (1,1+\rad)--(1+\rad,1);
			\draw[domain=0:1,smooth,variable=\x,blue] plot ({\x}, 	{-\rad+\x*\x-\x});
			\draw[domain=0:1,smooth,variable=\x,blue] plot ({1+\rad-\x*\x+\x},{\x});
			\draw[domain=0:1,smooth,variable=\x,blue] plot ({-\rad+\x*\x-\x},{\x});
			
			\fill (\z, {1+1-\z*\z+\z}) circle[radius=\circlesize];
			\fill ({\z-\thresh}, {1+1-\z*\z+\z-\thresh}) circle[radius=\circlesize];
			\fill ({\z-2*\thresh}, {1+1-\z*\z+\z-2*\thresh}) circle[radius=\circlesize];
			\draw (\z, {1+1-\z*\z+\z})--({\z-\thresh}, {1+1-\z*\z+\z-\thresh});
			\draw ({\z-2*\thresh}, {1+1-\z*\z+\z-2*\thresh})--({\z-\thresh}, {1+1-\z*\z+\z-\thresh});
		}
		\end{tikzpicture}\caption{Discrete Time is a 1st Order Approximation of Continuous Time.}\label{fig:approx}
	\end{center}
\end{figure}
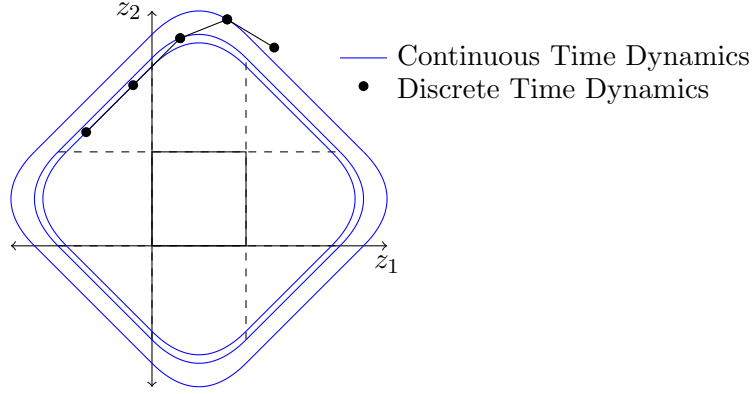

For the proof of Lemma \ref{lem:LinearRadius} is is useful to recall the following from Section \ref{sec:dual}:
\begin{align}
\bar{h}^*_1(z_1^t)&=\begin{cases}
\alpha_{10}z_1^t-\beta_{10} & \mbox{if } z_{1}^t\leq0\\
\alpha_{11}z_1^t-\beta_{11} & \mbox{if } z_{1}^t\geq1\\
\gamma_1(z_1^t)^2 + \alpha_{1}z_1^t-\beta_{1}			& \mbox{otherwise}\\
\end{cases}\\
\bar{h}^*_2(z_{2}^t)
&=\begin{cases}
\alpha_{20}z_2^t-\beta_{20} & \mbox{if } z_{2}^t\leq0\\
\alpha_{21}z_2^t-\beta_{21} & \mbox{if } z_{2}^t\geq1\\
\gamma_2(z_2^t)^2 + \alpha_{2}z_2^t-\beta_{2}			& \mbox{otherwise}\\
\end{cases}
\end{align}
where $\alpha_{i0}< 0, \alpha_{i1}> 0,$ and $\gamma_i>0$.

\begin{proof}[Proof of Lemma \ref{lem:LinearRadius}]
	Without loss of generality assume $z^{t_{j}},...,z^{t_{j+1}-1}\in Z_1$.
	Once again by selection of $Z_1$, $z_2^t>0$ and $x_{11}^t=1$ implying $z_1^t\geq 1$ for all $t=t_j,...,t_{j+1}-1$.  
	Let $R^t=\sum_{i=1}^2\bar{h}^*_i(z_i^t)$ be the total energy in the system in iteration $t$. 
	By \cite{GeorgiosSODA18}, the continuous time dynamics are captured by $\{z: \sum_{i=1}^2h^*_i(z_i)=R\}$ around the point $z^t$.
	When $z_1^t\geq 1$, the continuous time dynamics around $z^t$ are captured by
	\begin{align}
		R^t	&=\sum_{i=1}^2 \bar{h}_i^*(z_i)\\
			&=\bar{h}_2^*(z_2)+\alpha_{11}z_1-\beta_{11}
	\end{align}
	reducing to 
	\begin{align}
		z_1&= \frac{R^t+\beta_{11}-\bar{h}_2^*(z_2)}{\alpha_{11}}.
	\end{align}
	As observed earlier, (\ref{eqn:GD}) is simply a 1st order approximation of (\ref{eqn:contGD}) and therefore
	\begin{align}
		z_1^{t+1}	&=z_1^t-\frac{\nabla\bar{h}_2^*(z_2^t)}{\alpha_{11}}(z_2^{t+1}-z_2^t)\\
					&=z_1^t+\frac{\nabla\bar{h}_2^*(z_2^t)}{\alpha_{11}}\delta_1
	\end{align}
	where $\delta_1=z_2^t-z_2^{t+1}$ is shown to be constant in the proof of Lemma \ref{lem:constant}. We now examine the five possible locations for $z_2^t$ and $z_2^{t+1}$.  
	
	\textbf{Case 1: $z_2^t\geq1,z_2^{t+1}\geq1$}. We show there is no change to the energy in the system.  Since  $z_2^t\geq 1$, 
	\begin{align}
		z_1^{t+1}	&=z_1^t+\frac{\nabla\bar{h}_2^*(z_2^t)}{\alpha_{11}}\delta_1\\
					&=z_1^t+\frac{\alpha_{21}}{\alpha_{11}}\delta_1
	\end{align}
	
	The total energy in iteration $t+1$ is given by
	\begin{align}
		R^{t+1}	&= \sum_{i=1}^2 \bar{h}_i^*(z_i^{t+1})\\
				&=\alpha_{11}z_1^{t+1}-\beta_{11}+\alpha_{21}z_2^{t+1}-\beta_{21}\\
				&=\alpha_{11}\left(z_1^t+\frac{\alpha_{21}}{\alpha_{11}}\delta_1\right)-\beta_{11}+\alpha_{21}(z_2^{t}-\delta_1)-\beta_{21}\\
				&=\alpha_{11}z_1^t-\beta_{11}+\alpha_{21}z_2^{t}-\beta_{21}\\
				&= \sum_{i=1}^2 \bar{h}_i^*(z_i^{t})=R^t
	\end{align}
	and the energy in the system remains unchanged.

	\textbf{Case 2: $z_2^t\in (0,1),z_2^{t+1}\in (0,1)$}. 
	We show the energy increases by at least  ${\gamma_2}\delta_1^2$. 
	We begin with writing $z(\delta)$ as 
	\begin{align}
	z_1(\delta)	&=z_1^t+\frac{\nabla\bar{h}_2^*(z_2^t)}{\alpha_{11}}\delta\\
	&=z_1^t+\frac{2\gamma_2z_2^t+\alpha_2}{\alpha_{11}}\delta_1\\
	z_2(\delta) &=z_2^t-\delta
	\end{align}
	Therefore, $z^{t+1}=z(\delta_1)$. 
	Similarly, let $R(\delta)$ be energy associated with the point $z(\delta)$.  Formally, 
	\begin{align}
		R(\delta)	&= \sum_{i=1}^2 \bar{h}_i^*(z_i^{t+1}(\delta))\\
						&=\alpha_{11}z_1^{t+1}(\delta)-\beta_{11}+\gamma_2(z_2^{t+1}(\delta))^2+\alpha_2z_2(\delta)-\beta_2\\
						&=\alpha_{11}\left(z_1^t+\frac{2\gamma_2z_2^t+\alpha_2}{\alpha_{11}}\delta\right)-\beta_{11}+\gamma_2(z_2^{t}-\delta)^2+\alpha_2(z_2^{t}-\delta)-\beta_2
	\end{align}
	and $R(\delta_1)=R^{t+1}$ and $R(0)=R^t$. 
	Moreover $\frac{d^2R}{d\delta^2}=2\gamma_2>0$ and therefore $R(\delta)$ is strongly convex with parameter $2\gamma_2$. Thus,
	 \begin{align}
		R^{t+1}=R(\delta_1)	&\geq R(0)+R'(0)+{\gamma_2}\delta_1^2\\
							&= R^t+{\gamma_2}\delta_1^2
	\end{align}
	and the energy increases by at least ${\gamma_2}\delta_1^2$ completing Case 2. 
	
	\textbf{Case 3: $z_2^t\geq1,z_2^{t+1}\in (0,1)$}. The energy increases by at least ${\gamma_2}(1-z_{2}^{t+1})^2$.  This case follows identically to Case 2 by approximating $R(\delta_1)$ using strong convexity and $R(z_2^t-1)$.
	
	\textbf{Case 4: $z_2^t\in (0,1),z_2^{t+1}\leq 0$}. The energy increases by at least ${\gamma_2}(z_{2}^{t})^2$.  This case follows similarly to Cases 2 and 3. 
	
	\textbf{Case 5: $z_2^t\geq 1,z_2^{t+1}\leq 0$}. The energy increases by at least ${\gamma_2}$.  This case follows similarly to Cases 2-4. 
	
	We now can compute $r_{j+1}-r_j$. In each case, the increase in energy is bounded above since $z_2^t-z_2^{t+1}$ is bounded. Let $C_k$ be the number of times that Case $k$ occurs.  Case $1$ results in no change to the energy.  By Lemma \ref{lem:constant}, Case $2$ occurs at most $\kappa_1$ times.  Since $z_2^t$ is decreasing, Cases 3, 4, and 5 can occur at most once each.  Therefore $r_{j+1}-r_j\in \sum_{k=2}^5 C_k\cdot O(1)\leq (\kappa_1+3)\cdot O(1)\in O(1)$.  It remains to show $r_{j+1}-r_j\in \Omega(1)$.

	First suppose Case $2$ occurs at least once, then immediately we have $r_{j+1}-r_j\geq \gamma\delta_1^2 \in \Omega(1)$.
	If Case $2$ does not occur, then either Cases $3$ and $4$ occur, or Case $5$ occurs.  
	If Case $5$ occurs then $r_{j+1}-r_j\geq {\gamma_2}\in \Omega(1)$
	If Cases $3$ and $4$ occur but Case $2$ does not, only one $t$ is such that $z_2^t\in (0,1)$.  
	Thus, $r_{j+1}-r_j \geq \min_{z_2\in (0,1)}\{{\gamma_2}(z_{2})^2+{\gamma_2}(1-z_{2})^2\}=\frac{\gamma_2}{2}\in \Omega(1)$.  
	In all possibilities, $r_{j+1}-r_j\in \Omega(1)$ completing the proof of the lemma.
\end{proof}
\subsection{The Steps Per Partition are Proportional to the Energy}\label{sec:steps}

	In this section, we show that the number of steps in a partition is proportional to the total energy in the system.  
	We establish this by leveraging the connection between (\ref{eqn:contGD}) and (\ref{eqn:GD}). 
	Lemma \ref{lem:LinearSteps} is used in conjunction with Lemma \ref{lem:LinearRadius} to show a quadratic relationship between the total number of iterations and the number of partitions that the strategies have passed through. 
	This quadratic relationship directly leads to the $O(\sqrt{T})$ regret bound in Theorem \ref{thm:regret}. 

	\begin{lem}\label{lem:LinearSteps}
		$t_{j+1}-t_j\in \Theta(r_j)$. 
	\end{lem}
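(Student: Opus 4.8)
The plan is to fix a partition and, without loss of generality, assume $z^{t_j},\dots,z^{t_{j+1}-1}\in Z_1$; the other three regions follow by the same argument with the roles of $z_1,z_2$ and of the two players interchanged. Throughout this partition $z_1^t\ge 1$, so $x_{11}^t=1$ is constant, and exactly as in the proof of Lemma~\ref{lem:constant} player~2's transformed payoff decreases by a fixed amount each step, $z_2^{t+1}=z_2^t-\delta_1$ with $\delta_1>0$ constant. The partition ends precisely when $z_2$ first drops to $0$ or below, so the number of iterations is the smallest $m$ with $z_2^{t_j}-m\delta_1\le 0$; that is,
\begin{align}
t_{j+1}-t_j=\left\lceil \frac{z_2^{t_j}}{\delta_1}\right\rceil\in\Theta\!\left(z_2^{t_j}\right).
\end{align}
Thus it remains only to show $z_2^{t_j}\in\Theta(r_j)$.

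For the second step I would read off $z_2^{t_j}$ from the energy using the explicit piecewise form of $\bar h^*$ recalled before Lemma~\ref{lem:LinearRadius}. At the entry time $t_j$ the vector $z$ has just crossed the line $z_1=1$ coming from $Z_0$, so $z_1^{t_j}\in[1,1+O(1)]$ (the per-step change of $z_1$ is bounded) and, for $r_j$ large, the crossing occurs at the upper branch of the level set where $z_2^{t_j}\ge 1$. Since both coordinates then lie in the linear pieces,
\begin{align}
r_j=\bar h_1^*(z_1^{t_j})+\bar h_2^*(z_2^{t_j})=\bigl(\alpha_{11}z_1^{t_j}-\beta_{11}\bigr)+\bigl(\alpha_{21}z_2^{t_j}-\beta_{21}\bigr).
\end{align}
Because $z_1^{t_j}$ is bounded the first parenthesis is $\Theta(1)$, and since $\alpha_{21}>0$ solving for $z_2^{t_j}$ yields $z_2^{t_j}=\bigl(r_j-\Theta(1)\bigr)/\alpha_{21}\in\Theta(r_j)$. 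Chaining this with the previous display gives $t_{j+1}-t_j\in\Theta(r_j)$.

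Finally I would dispatch the large-$\eta$ caveat that $z$ might skip a region. When $z_2$ crosses $0$ the level set forces $z_1\approx r_j/\alpha_{11}\gg 0$, so $z^{t_{j+1}}$ lands in $Z_2$ rather than jumping to $Z_3$; likewise the upper-crossing property $z_2^{t_j}\ge 1$ holds once $r_j$ is large. By Lemma~\ref{lem:LinearRadius} the energy satisfies $r_{j+1}-r_j\in\Theta(1)$, hence $r_j\to\infty$, so both conditions fail for at most finitely many $j$; these contribute only a constant and do not affect the claimed $\Theta$.

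The step I expect to be the main obstacle is pinning down the geometry of the entry point --- establishing that for large $r_j$ the level set meets the boundary $z_1=1$ at a point with $z_2^{t_j}\in\Theta(r_j)$ and that no region is skipped --- since this is where the piecewise structure of $\bar h^*$ and the clockwise orbit must be combined; the constant-decrement observation and the linear-piece computation are then routine.
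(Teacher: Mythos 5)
Your proposal is correct, and its first half --- the constant per-step decrement $\delta_1$ of $z_2$ inside $Z_1$, hence $t_{j+1}-t_j\in\Theta(z_2^{t_j})$, reducing the lemma to showing $z_2^{t_j}\in\Theta(r_j)$ --- is exactly the paper's opening move. Where you genuinely diverge is in establishing $z_2^{t_j}\in\Theta(r_j)$. The paper runs a six-case analysis on the location of the pre-entry point $z^{t_j-1}$ (Figure \ref{fig:cases}): it rules out entry from below, computes $z_2^{t_j-1}$ from the level set of the continuous-time dynamics when entering from $Z_0$, and shows that the region-skipping entry from $Z_3$ forces the energy to be $\Theta(1)$ so the bound holds there too. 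You instead work directly at the entry point $z^{t_j}$: boundedness of the one-step change gives $z_1^{t_j}\in[1,1+O(1)]$, so $\bar{h}_1^*(z_1^{t_j})=O(1)$, and the definition $r_j=\bar{h}_1^*(z_1^{t_j})+\bar{h}_2^*(z_2^{t_j})$ together with $\alpha_{21}>0$ yields $z_2^{t_j}\in\Theta(r_j)$ whenever $z_2^{t_j}\ge 1$; the remaining possibility $z_2^{t_j}\in(0,1)$ makes $r_j=O(1)$ and, since $r_j\to\infty$ by Lemma \ref{lem:LinearRadius}, occurs only finitely often, which is harmless for an asymptotic $\Theta$ statement. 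This is a leaner route: it collapses the paper's Cases 1--3 into a single energy identity and replaces the explicit treatment of skipping with an appeal to finiteness, whereas the paper buys explicit, per-$j$ bounds in every configuration (including the proof that entry from the $Z_2$ side is impossible, which your write-up leaves to the clockwise-rotation property of Section \ref{sec:partition}). Two points deserve a sentence in your version: first, that the partition cannot end by $z_1$ dropping below $1$ while $z_2>0$ (when $z_2^t\in(0,1)$ the same energy decomposition places $z_1^t\in\Theta(r_j)$, far from that boundary, and while $z_2^t\ge 1$ one has $x_{21}^t=1$ so $z_1$ is nondecreasing); second, that $z^{t_j-1}\in Z_2$ is excluded, since otherwise your claim $z_1^{t_j}\le 1+O(1)$ has no justification.
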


	\begin{proof}[Proof of Lemma \ref{lem:LinearSteps}]
		Without loss of generality, assume $z^{t_{j}},...,z^{t_{j+1}-1}\in Z_1$. 
		As in the proof of Lemma \ref{lem:constant}, there exists a constant $\delta_1>0$ such that $z_2^t-z_2^{t+1}=\delta_1$ for all $t=t_j, ..., t_{j+1}-1$.
		This implies $\delta_1(t_{j+1}-t_j)=z_2^{t_j}-z_2^{t_{j+1}}$.
		Thus, to prove  Lemma \ref{lem:LinearSteps} it suffices to show $z_2^{t_j}-z_2^{t_{j+1}}\in \Theta(r_j)$. 
		By definition of $t_{j+1}$, $z_2^{t_{j+1}-1}\in Z_1$ and therefore
		$0\geq z_2^{t_{j+1}} = z_2^{t_{j+1}-1}-\delta\geq -\delta$.
		Thus, $z_2^{t_j}-z_2^{t_{j+1}}\in \Theta(r_j)$ if and only if $z_2^{t_j}\in \Theta(r_j)$.
 
		To show $z_2^{t_j}\in \Theta(r_j)$ and complete the proof, we break the problem into 6 cases based on the location of $z^{t_j-1}$ as depicted in Figure \ref{fig:cases}. 
		The analyses for Cases 1-3 are similar and we show Cases 4-6 can never occur. 
		
		\begin{figure}[h]
			\begin{center}
				\begin{tikzpicture}[scale=2]
				
				\def\shift{1}
				\def\axisshift{.5}
				\def\ETA{.5}
				
				\draw[<->] (0,0-\shift-\axisshift)--(0,1+\shift+\axisshift);
				\draw[<->] (0-\shift-\axisshift,0)--(1+\shift+\axisshift,0);
				
				\node[right] at (1+\shift+\axisshift,0) {$z_1$};
				\node[above] at (0,1+\shift+\axisshift) {$z_2$};
				
				\node at (.5,.6) {\tiny fully mixed};
				\node at (.5,.4) {\tiny strategies};	
				\draw (0,0) rectangle (1,1);

				\foreach \r in {}{
					\draw[domain=-.5/\ETA:.5/\ETA,smooth,variable=\x,blue] plot ({\ETA*\x+.5}, {.5+\ETA*(\r+3/(4*\ETA)-\ETA*\x*\x)});
					\draw[blue] (0,-\r/2)--(-\r/2,0);
					\draw[blue] (1,-\r/2)--(1+\r/2,0);
					\draw[blue] (0,1+\r/2)--(-\r/2,1);
					\draw[blue] (1,1+\r/2)--(1+\r/2,1);
					\draw[domain=-.5/\ETA:.5/\ETA,smooth,variable=\x,blue] plot ({\ETA*\x+.5}, 	{.5-\ETA*((\r+3/(4*\ETA)-\ETA*\x*\x)});
					\draw[domain=-.5/\ETA:.5/\ETA,smooth,variable=\x,blue] plot ({.5+\ETA*(\r+3/(4*\ETA)-\ETA*\x*\x)},{\ETA*\x+.5});
					\draw[domain=-.5/\ETA:.5/\ETA,smooth,variable=\x,blue] plot ({.5-\ETA*(\r+3/(4*\ETA)-\ETA*\x*\x)},{\ETA*\x+.5});
				}
				\draw[pattern=crosshatch, pattern color=black,opacity=.6] (-\shift,1)--(1,1)--(1,1+\shift)--(-\shift,1+\shift)--cycle;
				\node[black,above] at (.5,1+\shift) {$Z_0$};
				\draw[pattern=crosshatch dots, pattern color=black,opacity=.6] (1,0)--(1+\shift,0)--(1+\shift,1+\shift)--(1,1+\shift)--cycle;
				\node[black,right] at (1+\shift,.5) {$Z_1$};	
				\draw[pattern=grid, pattern color=black,opacity=.6] (0,0)--(1+\shift,0)--(1+\shift,-\shift)--(0,0-\shift)--cycle;
				\node[black,below] at (.5,-\shift) {$Z_2$};
				\draw[pattern=dots, pattern color=black,opacity=.6] (0,1)--(-\shift,1)--(-\shift,-\shift)--(0,-\shift)--cycle;
				\node[black,left] at (-\shift,.5) {$Z_3$};
				\node[fill=white] at (.5,1.5) {Case 1};
				\node[fill=white] at (-.5,1.5) {Case 2};
				\node[fill=white] at (-.5,.5) {Case 3};
				\node[fill=white] at (-.5,-.5) {Case 4};
				\node[fill=white] at (.5,-.5) {Case 5};
				\node[fill=white] at (1.5,-.5) {Case 6};
				\draw(1,0)--(1,-1);
				\draw[dashed] (-\shift,1)--(-\shift-\axisshift,1);
				\node[left] at (-\shift-\axisshift,1) {$z_2=1$};
				\end{tikzpicture}\caption{Cases for Lemma \ref{lem:LinearSteps}.}\label{fig:cases}
			\end{center}
		\end{figure}
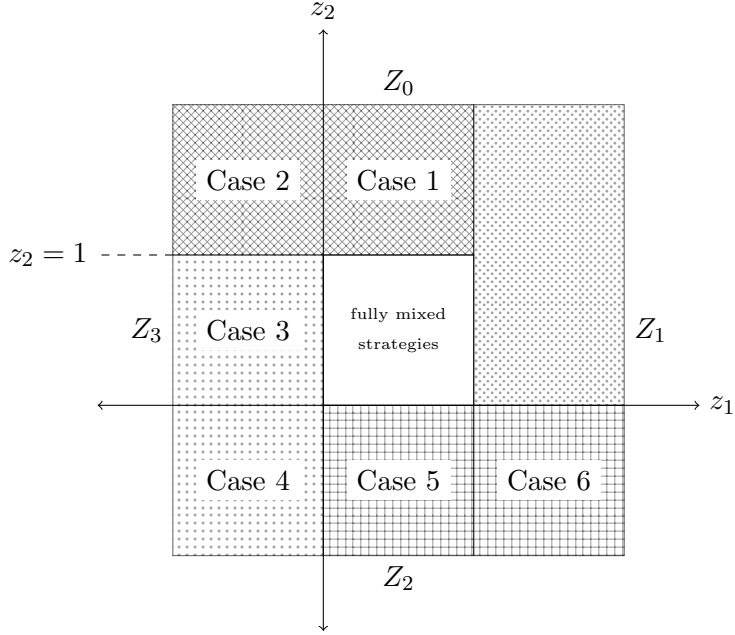

			\textbf{Case 1}: 
			Let $R^{t_j-1}$ be the energy at time $t_j-1$.
			Let $M_0=\{z: z\in Z_0, z_1\in [0,1], \sum_{i=1}^2\bar{h}_i^*(z_i)=R^{t_j-1}\}$.
			By definition, $z^{t_j-1}\in M_0$.
			Observe for $z\in M_0$, 
			\begin{align}
				R^{t_j-1}	&=\sum_{i=1}^2\bar{h}_i^*(z_i)\\
							&=\gamma_1(z_1)^2+\alpha_1z_1-\beta_1+\alpha_{21}z_2-\beta_{21}
			\end{align}
			and therefore 
			\begin{align}
				z_2&=\frac{R^{t_j-1}-\gamma_1(z_1)^2-\alpha_1z_1+\beta_1+\beta_{21}}{\alpha_{21}}\label{eqn:ref}
			\end{align}
			which is a concave function and therefore minimized at $z_1=0$ or $z_1=1$.
			Thus,
			\begin{align}
				z_2^{t_j-1} \geq \min_{z_1\in \{0,1\}} \frac{R^{t_j-1}-\gamma_1(z_1)^2-\alpha_1z_1+\beta_1+\beta_{21}}{\alpha_{21}}\in \Theta(R^{t_j-1}).
			\end{align} 
			Similar to the proof of Lemma \ref{lem:LinearRadius}, we compute $z_2^{t_j}$ from $z_2^{t_j-1}$: 
			\begin{align}
				z_2^{t_j}&=z_2^{t_j-1}-\frac{\nabla \bar{h}_1^*(z_1^{t_j-1})}{\alpha_{21}}(z_{1}^{t_j}-z_1^{t_j-1})\\
						&=z_2^{t_j-1}-\frac{ 2\gamma_1z_1^{t_j-1} -\alpha_1}{\alpha_{21}}\delta_0\\
						&\in z_2^{t_j-1}+\Theta(1)\in \Theta(R^{t_j-1})
			\end{align}
			since $z_1^{t_j-1}\in [0,1]$.  Finally, by Lemma \ref{lem:LinearRadius}, $r^{j-1}\leq R^{t_j-1} \leq r^{j}=r^{j-1}+\Theta(1)$ and therefore $z_2^{t_j}\in \Theta(r^j)$ completing Case 1.
			
			\textbf{Case 2:} This case follows identically to Case 1 using $\nabla \bar{h}_1^*(z_1^{t_j-1})=\alpha_{10}$.  
			
			\textbf{Case 3:} Similar to the proof of Case 1, $z_1^{t_j-1}\in -\Theta(R^{t_j-1})$ and $z_1^{t_j}=z_1^{t_j-1}+\Theta(1)$.  
			However, since $z^{t_j}\in Z_1$, $z^{t_j}_1\geq 1$ implying $R^{t_j-1}\in \Theta(1)$. 
			Therefore, by Lemma \ref{lem:LinearRadius}, $r^{j}\in \Theta(1)$.  
			Let $\delta_3>0$ be as in the proof of Lemma \ref{lem:LinearRadius}.  
			Since $z^{t_j-1}\in Z_3$ and $z_2^{t_j-1}\in [0,1]$, $z_2^{t_j}=z_2^{t_j-1}+\delta_3$ and $z_2^{t_j}\in \Theta(1)=\Theta(r^j)$ completing Case 3.  
			
			\textbf{Case 4, 5 and 6:} In Case 4, the sign of $\nabla \bar{h}_2^*(z_2^{t_j-1})$ implies $z_1^{t_j}<z_1^{t_j-1}<0$.  In Case 5, $z_1^{t_j}=z_1^{t_j-1}-\delta_2<1$ where $\delta_2>0$ is defined in the proof of Lemma \ref{lem:LinearRadius}.  In Case 6, the sign of $\nabla \bar{h}_1^*(z_1^{t_j-1})$ implies $z_2^{t_j}<z_2^{t_j-1}<0$.  All three cases contradict that $z_1^{t_j}\in Z_1$ completing Cases 4, 5, and 6.
			
			In all 6 cases, $z_2^{t_j}\in \Theta(r_j)$ implying $t_{j+1}-t_j\in \Theta(r_j)$ completing the proof.
	\end{proof}

\section{Convergence to the Boundary}\label{app:boundary}
\begin{proof}[Proof of Theorem \ref{thm:boundary}] The proof of convergence to the boundary follows similarly to the details for Theorem \ref{thm:regret}. By \cite{BaileyEC18}, there exists a constant $w>0$ and a $T$ such that $\min_{i\in\{1,2\}}\{|x_{i1}^t-x_{i1}^{NE}|\}\geq w$ for all $t\geq T$.  Similar to Theorem \ref{thm:regret}, we can then partition the dual space around the Nash equilibrium as follows: 
\begin{align*}
\color{black}Z_0&\color{black}=\left\{z: z_1< x_{11}^{NE}+w,z_2\geq x_{22}^{NE}+w\right\}.\\
\color{black}Z_1&\color{black}=\left\{z: z_1\geq x_{11}^{NE}+w,z_2> x_{22}^{NE}-w\right\}.\\
\color{black}Z_2&\color{black}=\left\{z: z_1> x_{11}^{NE}-w,z_2\leq x_{22}^{NE}-w\right\}.\\
\color{black}Z_3&\color{black}=\left\{z: z_1\leq x_{11}^{NE}-w,z_2< x_{22}^{NE}+w\right\}.
\end{align*}

\begin{figure}[h]
	\begin{center}
		\begin{tikzpicture}[scale=1.5]
		
		\def\shift{1}
		\def\axisshift{.5}
		\def\ETA{.5}
		\def\NEa{.45}
		\def\NEb{.5}
		\def\w{.37}
		\fill (\NEa,\NEb) circle[radius=1.5pt];
		\node[below ] at (\NEa,\NEb) {$x^{NE}$};
		\draw[<->] (0,0-\shift-\axisshift)--(0,1+\shift+\axisshift);
		\draw[<->] (0-\shift-\axisshift,0)--(1+\shift+\axisshift,0);
		
		\node[right] at (1+\shift+\axisshift,0) {$z_1$};
		\node[above] at (0,1+\shift+\axisshift) {$z_2$};

		\draw (\NEa, \NEb+.1)--(\NEa, 1.4+\shift);
		\draw (\NEa+\w, 1.1)--(\NEa+\w, 1.4+\shift);
		\node at (\NEa+\w/2,1.25+\shift) {$w$};
		\draw (\NEa,1.25+\shift)--(\NEa+.05,1.25+\shift);
		\draw (\NEa+\w,1.25+\shift)--(\NEa+\w-.05,1.25+\shift);
		
			\draw[dashed] (-\shift-\axisshift,1)--(1+\shift,1);
			\node[left]	at (-\shift-\axisshift,1) {$z_2=1$};
		\draw (0,0) rectangle (1,1);
		
		\draw[pattern=crosshatch, pattern color=black,opacity=.6] (-\shift,\NEb+\w)--(\NEa+\w,\NEb+\w)--(\NEa+\w,1+\shift)--(-\shift,1+\shift)--cycle;
		\node[black,above] at (-\shift/2+\NEa/2-\w/2,1+\shift) {$Z_0$};
		\draw[pattern=crosshatch dots, pattern color=black,opacity=.6] (\NEa+\w,\NEb-\w)--(1+\shift,\NEb-\w)--(1+\shift,1+\shift)--(\NEa+\w,1+\shift)--cycle;
		\node[black,right] at (1+\shift,1/2+\shift/2+\NEb/2-\w/2) {$Z_1$};	
		\draw[pattern=grid, pattern color=black,opacity=.6] (\NEa-\w,\NEb-\w)--(1+\shift,\NEb-\w)--(1+\shift,-\shift)--(\NEa-\w,0-\shift)--cycle;
		\node[black,below] at (.5+\shift/2+\NEa/2-\w/2,-\shift) {$Z_2$};
		\draw[pattern=dots, pattern color=black,opacity=.6] (\NEa-\w,\NEb+\w)--(-\shift,\NEb+\w)--(-\shift,-\shift)--(\NEa-\w,-\shift)--cycle;
		\node[black,left] at (-\shift,-\shift/2+\NEb/2-\w/2) {$Z_3$};
		\end{tikzpicture}\caption{Partitioning for Theorem \ref{thm:boundary}.}\label{fig:partition2}
	\end{center}
\end{figure}
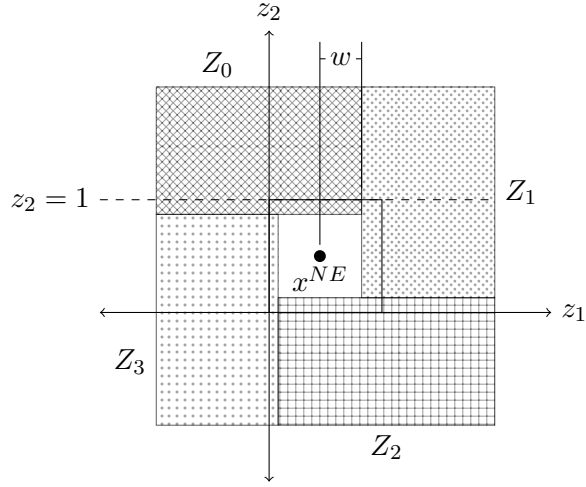

Once again the strategies rotate clockwise when updated with (\ref{eqn:GD}).  
Similar to Lemma \ref{lem:LinearRadius}, the energy increases by at least a constant in each iteration.  
By continuity of $\bar{h}_i^*$ and compactness, the energy $\sum_{i=1}^2 \bar{h}_i^*(z_i)$ is bounded above by $u$ when $z\in [0,1]^2$.  
Similar to Lemma \ref{lem:LinearSteps}, $z^t$ spends a bounded number of steps in a partition before moving onto the next partition. 
Since energy is increasing by a constant each time $z^t$ enters a new partition, there must exist an iteration $B$ when the energy exceeds $u$.  
Thus, for all $t\geq B$,  $z^t\notin [0,1]^2$ implying $x^t$ is on the boundary. \end{proof}
\section{Proof of Theorem \ref{lower_bound}}\label{sec:Lower}
In this section, we establish that the worst-case regret is exactly $\Theta(\sqrt{T})$. 
To establish this result, it remains to provide a game, learning rate, and initial condition $y^0$ where the regret is $\Omega(\sqrt{T})$.   
To establish this lower bound, we first express iteration $t$ uniquely with $t=\frac{n(n+1)}{2}+k$ for some $k\in \{0,...,n\}$.  
Using notation, we provide the exact position of the payoff vector, $y_i^t$ in each iteration. 
With this position, we compute the exact utility and regret through iteration $t$. 
Specifically, we show that in iteration $\frac{n(n+1)}{2}+k$, the total regret is $\frac{n}{2}+O(1)$.
To show these results, we use the game Matching Pennies with learning rate $\eta=1$ and initial payoff vectors $y_1^0=y_2^0=(1,0)$.

\begin{align}\tag{Matching Pennies Payoff Matrix}
\left( \begin{array}{r r} 1&-1\\-1&1\end{array}\right)
\end{align}

\begin{lem}\label{lem:ExactPosition}
	Consider the game Matching Pennies with learning rate $\eta=1$ and initial conditions $y_{1}^0=y_2^0=(1,0)$. In iteration  $t=\frac{n(n+1)}{2}+k$ where $k\in \{0,...,n\}$, player $i$'s payoff vector is given by  
	\begin{align*}
	y_{1}^{\frac{n(n+1)}{2}+k}=
	&\begin{cases}
		(1+k,-k) & \text{ if } n\equiv 0 \mod 4\\
		(1+n-k,-n+k) & \text{ if } n\equiv 1 \mod 4\\
		(-k,1+k) & \text{ if } n\equiv 2 \mod 4\\
		(-n+k,1+n-k) & \text{ if } n\equiv 3 \mod 4
	\end{cases}\\
	y_{2}^{\frac{n(n+1)}{2}+k}=
&	\begin{cases}
	(1+n-k,-n+k) & \text{ if } n\equiv 0 \mod 4\\
	(-k,1+k) & \text{ if } n\equiv 1 \mod 4\\
	(-n+k,1+n-k) & \text{ if } n\equiv 2 \mod 4\\
	(1+k,-k) & \text{ if } n\equiv 3 \mod 4
	\end{cases}.
	\end{align*}
\end{lem}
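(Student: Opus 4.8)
The plan is to prove the two formulas simultaneously by strong induction on the iteration index $t$, exploiting the bijection $t=\frac{n(n+1)}{2}+k$ with $k\in\{0,\dots,n\}$. Every $t\ge 0$ determines a unique such pair $(n,k)$; a step $t\to t+1$ advances $(n,k)$ to $(n,k+1)$ when $k<n$, and advances $(n,n)$ to $(n+1,0)$ when $k=n$, since $\frac{n(n+1)}{2}+n+1=\frac{(n+1)(n+2)}{2}$. The base case $t=0$ corresponds to $(n,k)=(0,0)$, where both formulas return $(1,0)$, matching the initial conditions $y_1^0=y_2^0=(1,0)$. It then remains to run the inductive step: assuming the claimed form of $y_1^t$ and $y_2^t$, I compute the strategies $x^t$, apply the payoff recursion, and check that the output is exactly the claimed form of $y_1^{t+1},y_2^{t+1}$.

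The first key step is to specialize the strategy map of Lemma \ref{lem:solution} to Matching Pennies with $\eta=1$. Taking the fully-mixed candidate $S_i=\{1,2\}$ and projecting onto $[0,1]$ gives
\begin{align}
x_{i1}^t=\mathrm{clip}_{[0,1]}\!\left(\tfrac{y_{i1}^t-y_{i2}^t}{2}+\tfrac12\right),\qquad i=1,2.
\end{align}
Substituting the inductive hypothesis, a short linear computation shows that in each residue class $n\bmod 4$ the argument of the clip is either $\ge 1$ or $\le 0$ for every $k\in\{0,\dots,n\}$; for instance when $n\equiv 0$ one gets $\tfrac{y_{11}^t-y_{12}^t}{2}+\tfrac12=1+k\ge 1$ and $\tfrac{y_{21}^t-y_{22}^t}{2}+\tfrac12=1+n-k\ge 1$. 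Hence \emph{both players play a pure strategy throughout each block}, and their pure choices are constant within a block while rotating through all four corner-pairs as $n$ advances mod $4$. This is the crucial simplification: because the strategies saturate, the increments $Ax_2^t$ and $-A^\intercal x_1^t$ each equal a \emph{constant} vector in $\{(1,-1),(-1,1)\}$ for the whole block, so the payoff vectors travel along a straight line with unit-speed coordinates, consistent with the linear growth of energy and block length from Lemmas \ref{lem:LinearRadius} and \ref{lem:LinearSteps}.

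With the increments pinned down, the recursion $y_1^{t+1}=y_1^t+Ax_2^t$ and $y_2^{t+1}=y_2^t-A^\intercal x_1^t$ reduces to adding a fixed $\pm(1,-1)$, and I verify the target formula in two regimes. For an interior step ($k<n$, residue unchanged) the addition reproduces the same formula with $k$ replaced by $k+1$. For a block-boundary step ($k=n$) the same addition lands exactly on the $(n+1,0)$ formula of the \emph{next} residue class; e.g. for $n\equiv 0$ the increment $(1,-1)$ carries $y_1^t=(1+n,-n)$ to $(2+n,-1-n)=(1+(n+1),-(n+1))$, the $n{+}1\equiv 1$ value at $k=0$, and symmetrically for $y_2$. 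Carrying this out for all four residues gives the claim.

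The main obstacle is organizational rather than analytical: one must verify, across the $4$ residues $\times\,2$ regimes (interior/boundary), that the saturated strategies and the resulting constant increments align with the residue-advancing rotation at each block boundary, and in particular that the clip argument never falls strictly inside $(0,1)$ so that no block contains a genuinely mixed iterate. The bookkeeping can be compressed by noting the $90^\circ$ rotational symmetry of the dynamics: a single rotation argument relating $(n\bmod 4)$ to $(n+1\bmod 4)$ collapses the four residue computations into one, leaving only the interior/boundary distinction to check explicitly.
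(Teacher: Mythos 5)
Your proposal is correct and matches the paper's own argument: both proceed by induction on $t$ via the $(n,k)$ decomposition, observe that the inductive hypothesis forces both players onto pure strategies in every residue class (so the increments are the constant vectors $\pm(1,-1)$), and then check the interior step $k<n$ and the block-boundary step $k=n$ across the four residues mod $4$. The only difference is cosmetic — the paper writes out the four cases explicitly rather than invoking a rotational symmetry — so no further comparison is needed.
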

\begin{proof}
	The result trivially holds for the base case $t=n=k=0$.  
	We now proceed by induction and assume the results holds for $t=\frac{n(n+1)}{2}+k$ and show the result holds for $t+1$.
	We break the problem into four cases based on the remainder of $n/4$.  
	
	\textbf{Case 1: $n\equiv 0 \mod 4$}.  By the inductive hypothesis, $y_1^{t}= (1+k,-k)$ and $y_2^{t}=(1+n-k,-n+k)$. 
	Since $k\leq n$, $y_{11}^t\geq 1$ and $y_{21}^t\geq 1$.  
	Following similarly to Section \ref{sec:dual}, 
	\begin{align}
	x_{i1}^t= \begin{cases}
	1 & \text{if } y_{i1}^t\geq 1\\	
	0 & \text{if } y_{i1}^t\leq 0\\
	y_{i1}^t & \text{otherwise}
	\end{cases}.
	\end{align}
	Thus, $x_{1}^t=x_{2}^t=(1,0)$ implying 
	\begin{align}
	y_1^{t+1}&= y_1^t+Ax_2^t\\
	&=y_1^t+(1,-1)\\
	&=(1+k,-k)+(1,-1),\\
	y_2^{t+1}&= y_2^t-A^\intercal x_1^t\\
	&=y_2^t+(-1,1)\\
	&=(1+n-k,-n+k)+(-1,1).
	\end{align}
	If $k<n$, then $t+1=\frac{n(n+1)}{2}+[k+1]$ and $y_{1}^{t+1}=(1+[k+1],-[k+1])$ and $y_{2}^{t+1}=(1+n-[k+1], -n+[k+1])$ as predicted by the statement of the lemma. 
	If instead $k=n$, then $t+1= \frac{[n+1]([n+1]+1)}{2}$ where $[n+1]\equiv 1 \mod 4$.  
	Moreover,  $y_1^{t+1}=(1+k+1,-k-1)=([n+1]+1,-[n+1])$ and $y_2^{t+1}=(1+n-k-1,-n+k+1)=(0,1)$ again matching the statement of lemma.
	Thus, the inductive step holds for all values of $k$ when $n\equiv 0 \mod 4$.   
	
	\textbf{Case 2: $n\equiv 1 \mod 4$}. Since $k\in [0,n]$, $y_{11}^t\geq 1$ and $y_{21}^t \leq 0$. 
	Following identically to Case $1$, 
	$y_{1}^{t+1}=(1+n-k-1,-n+k+1)$ and $y_2^{t+1}=(-k-1,1+k+1)$ matching the statement of the lemma for all possible values of $k$.  
	
	\textbf{Case 3: $n\equiv 2 \mod 4$}. Since $k\in [0,n]$, $y_{11}^t\leq 0$ and $y_{21}^t \leq 0$. 
Following identically to the previous cases, 
$y_{1}^{t+1}=(-k-1,1+k+1)$ and $y_2^{t+1}=(-n+k+1,1+n-k-1)$ matching the statement of the lemma for all possible values of $k$.

	\textbf{Case 4: $n\equiv 3 \mod 4$}. Since $k\in [0,n]$, $y_{11}^t\leq 0$ and $y_{21}^t \geq 0$. 
Following identically to the previous cases, 
$y_{1}^{t+1}=(-n+k+1,1+n-k-1)$ and $y_2^{t+1}=(1+k+1,-k-1)$ matching the statement of the lemma for all possible values of $k$.  

In all four cases, the inductive hypothesis holds completing the proof of the lemma. 
\end{proof}

With the exact value of the payoff vector in each iteration, we can compute the cumulative utility.

{\color{black}
\begin{lem}\label{lem:ExactUtility}
	Consider the game Matching Pennies with learning rate $\eta=1$ and initial conditions $y_{1}^0=y_2^0=(1,0)$. In iteration  $t=\frac{n(n+1)}{2}+k$ where $k\in \{0,...,n\}$, player $1$'s cumulative utility is 
	\begin{align*}
	\sum_{s=0}^t x_1^s\cdot A x_2^s=
	\begin{cases}
	1-\frac{n}{2}+k & \text{ if } n\equiv 0 \mod 2\\
	\frac{n-1}{2}-k & \text{ if } n\equiv 1 \mod 2\\	\end{cases}.
	\end{align*}
\end{lem}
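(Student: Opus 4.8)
The plan is to use Lemma \ref{lem:ExactPosition} to read off the pure strategies played in every iteration, observe that the per-iteration utility is constant within each triangular block indexed by $n$, and then sum a simple alternating series.

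First I would determine the strategies. For each residue class of $n \bmod 4$, Lemma \ref{lem:ExactPosition} pins down $y_1^t$ and $y_2^t$; since $k \in \{0,\ldots,n\}$, the first coordinate of each payoff vector is always $\geq 1$ or $\leq 0$, so by the matching-pennies specialization of the thresholding rule of Section \ref{sec:dual} both players play pure strategies. Explicitly I would record $x_1^t, x_2^t \in \{(1,0),(0,1)\}$ in each of the four cases and evaluate $x_1^t \cdot A x_2^t$, which for pure strategies $e_i, e_j$ is just the entry $A_{ij}$. A direct check across the four cases gives utility $+1$ whenever $n$ is even and $-1$ whenever $n$ is odd; crucially this depends only on the block index $n$ and not on $k$.

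Next, I would exploit the block structure. Writing each $s \leq t$ uniquely as $s = \frac{m(m+1)}{2} + j$ with $j \in \{0,\ldots,m\}$, block $m$ consists of exactly $m+1$ consecutive iterations, each contributing utility $(-1)^m$. Hence for $t = \frac{n(n+1)}{2}+k$ the cumulative utility splits into the $n$ complete blocks $m = 0,\ldots,n-1$ and the partial block $m=n$ of $k+1$ iterations:
\begin{align*}
\sum_{s=0}^{t} x_1^s \cdot A x_2^s = \sum_{m=0}^{n-1}(-1)^m (m+1) + (-1)^n (k+1).
\end{align*}
Then I would evaluate the closed form of $A_n := \sum_{m=0}^{n-1}(-1)^m(m+1)$ by pairing consecutive terms, each pair contributing $(2i+1)-(2i+2)=-1$, giving $A_n = -n/2$ for even $n$ and $A_n = (n+1)/2$ for odd $n$ (a one-line induction confirms this). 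Substituting yields $-n/2 + (k+1) = 1 - n/2 + k$ for even $n$ and $(n+1)/2 - (k+1) = (n-1)/2 - k$ for odd $n$, which are exactly the two cases of the lemma.

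The only real bookkeeping hazard, and the step I would be most careful about, is the parity split: the sign of the per-iteration utility alternates with the block index $n$, while the closed form of $A_n$ also alternates, so one must align the parity of the completed-block sum with the parity of the partial-block term and verify the arithmetic in each case. Everything else is a routine specialization of Lemma \ref{lem:ExactPosition}. Alternatively, the statement admits a direct induction on $t$ that mirrors the proof of Lemma \ref{lem:ExactPosition}, treating the sub-cases $k<n$ (same block) and $k=n$ (advance to block $n+1$) separately; I would keep this as a fallback, since the block decomposition is the more transparent route.
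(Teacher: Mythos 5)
Your proof is correct, and it takes a genuinely different route from the paper's. The paper proves the lemma by induction on $t$, splitting into four cases according to $n \bmod 4$ and, within each, the sub-cases $k<n$ and $k=n$; at each inductive step it reads $x_1^{t+1}, x_2^{t+1}$ off Lemma \ref{lem:ExactPosition} and adds the resulting $\pm 1$ to the running total. You instead make explicit the structural fact the paper leaves implicit: the per-iteration utility equals $(-1)^n$ uniformly over the entire block indexed by $n$ (I checked all four residue classes against Lemma \ref{lem:ExactPosition} and your signs are right: $A_{11}=A_{22}=+1$ for $n$ even, $A_{12}=A_{21}=-1$ for $n$ odd), and then you obtain the cumulative utility as the closed-form alternating sum $\sum_{m=0}^{n-1}(-1)^m(m+1)+(-1)^n(k+1)$, whose evaluation matches both cases of the lemma. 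Your block decomposition is the more transparent argument --- it explains \emph{why} the answer is linear in $k$ with slope $(-1)^n$ and why the parity of $n$ alone governs the two cases --- at the cost of one extra closed-form identity to verify; the paper's induction avoids that identity but obscures the block structure and requires more case bookkeeping. Your fallback (direct induction on $t$ with sub-cases $k<n$ and $k=n$) is essentially verbatim the paper's proof.
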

\begin{proof}
	We again proceed by induction.  
	The base case $t=n=k=0$ trivially holds. 
	We assume the result holds for $t =\frac{n(n+1)}{2}+k$ and show it holds for $t+1$.  
	Again, we break the problem into four cases based on the remainder of $n/4$. 
	
	\textbf{Case 1:} $n \equiv 0 \mod 4$. 
	First, we consider $k<n$. 
	Since $k<n$, $t+1$ is in the form $\frac{n(n+1)}{2}+[k+1]$ where $k+1\leq n$.
	Thus, by Lemma \ref{lem:ExactPosition}, $y_{11}^{t+1}=1+[k+1]\geq 1$ and $y_{21}^{t+1}=1+n-[k+1] \geq 1$ implying $x_{1}^{t+1}=x_2^{t+1}=(1,0)$. 
	therefore, 
	\begin{align}\sum_{s=0}^{t+1} x_1^s\cdot A x_2^s=  x_1^{t+1}\cdot A x_2^{t+1}+\sum_{s=0}^{t} x_1^s\cdot A x_2^s=1+1-\frac{n}{2}+k=1-\frac{n}{2}+[k+1].
	\end{align} This completes Case 1 when $k<n$.
	
	If instead $k=n$, then $t$ is in the form $\frac{[n+1]([n+1]+1)}{2}$ where $[n+1]\equiv 1 \mod 4$. 
	Similar to before, $y_{11}^{t+1}=1+n\geq 1$ and $y_{21}^{t+1}=0$ implying $x_1^{t+1}=(1,0)$ and $x_2^{t+1}=(0,1)$.  
	Therefore, 
	\begin{align}\sum_{s=0}^{t+1} x_1^s\cdot A x_2^s=  x_1^{t+1}\cdot A x_2^{t+1}+\sum_{s=0}^{t} x_1^s\cdot A x_2^s=-1+1-\frac{n}{2}+n=\frac{[n+1]-1}{2}.
	\end{align} 
	This completes Case 1 when $k=n$. 
	Thus the inductive hypothesis holds in Case 1.
	
	\textbf{Case 2:} $n\equiv 2 \mod 4$. This case holds similarly to Case 1.  The only difference is that $x_1^{t+1}=x_2^{t+1}=(0,1)$ when $k<n$ and $x_2^{t+1}=(1,0)$ when $k=n$ which does not change the value of $x_1^{t+1}\cdot Ax_2^{t+1}$. 
	
	\textbf{Case 3:} $n\equiv 1 \mod 4$. First consider $k<n$ implying $t+1$ is in the form $\frac{n(n-1)}{2}+[k+1]$ where $k+1\leq n$.  
	Similar to Case 1, $x_{1}^{t+1}=(1,0)$ and $x_{2}^{t+1}=(0,1)$.  
	This implies
	\begin{align}\sum_{s=0}^{t+1} x_1^s\cdot A x_2^s=  x_1^{t+1}\cdot A x_2^{t+1}+\sum_{s=0}^{t} x_1^s\cdot A x_2^s=-1+\frac{n-1}{2}-k=\frac{n-1}{2}-[k+1].
	\end{align}
	completing Case 3 when $k<n$. 
	
	If instead $k=n$, then $t$ is in the form $\frac{[n+1]([n+1]+1)}{2}$ where $[n+1]\equiv 2 \mod 4$.  
	This implies $x_1^{t+1}=(0,1)$ and $x_2^{t+1}=(0,1)$.  
	Therefore, 
	\begin{align}\sum_{s=0}^{t+1} x_1^s\cdot A x_2^s=  x_1^{t+1}\cdot A x_2^{t+1}+\sum_{s=0}^{t} x_1^s\cdot A x_2^s=1+\frac{n-1}{2}-n=1-\frac{[n+1]}{2},
	\end{align}
	matching the statement of the lemma.  
	This completes Case 3.
	
	\textbf{Case 4:} $n\equiv 3 \mod 4$. Case 4 follows from Case 3 in the same way that Case 2 follows from Case 1.  
	The hypothesis holds under all cases completing the proof of the lemma.   
\end{proof}
}
 
We now show that Matching Pennies with learning rate $\eta_1$ and initial conditions $y_{1}^0=y_2^0=(1,0)$ has regret $\Theta(\sqrt{T})$ when updated with (\ref{eqn:GD}). \vspace{.1in}

\begin{proof}[Proof of Theorem \ref{lower_bound}]
	Theorem \ref{thm:regret} establishes that the regret is $O(\sqrt{T})$.  
	To show that the regret is $\Omega(\sqrt{T})$, we show that in iteration $t=\frac{n(n+1)}{2}+k$, that player $1$'s regret is $\frac{n}{2}+O(1)$ completing the proof.  
	
	The total regret through iteration $t$ is given by
	\begin{align}
	\max_{x_1\in {\cal X}_1} x_1\cdot \sum_{t=0}^t Ax_2^s&=\max_{x_1\in {\cal X}_1} x_1\cdot (y_1^{t+1}-y^0)\\
	&= |y_{11}^{t+1}-1|
	\end{align}
	since $y_{11}^t-1=y_{12}^t$ for all $t$ by Lemma \ref{lem:ExactPosition}.
	
	If $k<n$, then $t+1=\frac{n(n-1)}{2}+[k+1]$ and   
	\begin{align}
	|y_{11}^{t+1}|- \sum_{s=0}^tx_1^s\cdot Ax_2^s=
	\begin{cases}
	[k+1]-(1-\frac{n}{2}+k)=\frac{n}{2} & \text{ if } n\equiv 0 \mod 4\\
	n-[k+1] - (\frac{n-1}{2}-k)=\frac{n-1}{2} & \text{ if } n\equiv 1 \mod 4\\
	[k+1]+1-(1-\frac{n}{2}+k)=\frac{n}{2}+1 & \text{ if } n\equiv 2 \mod 4\\
	n-[k+1]+1- (\frac{n-1}{2}-k)=\frac{n+1}{2} & \text{ if } n\equiv 3 \mod 4
	\end{cases}.
	\end{align}
	
	If $k=n$, then $t+1=\frac{[n+1]([n+1]-1)}{2}$ and   
	\begin{align}
	|y_{11}^{t+1}|- \sum_{s=0}^tx_1^s\cdot Ax_2^s=
	\begin{cases}
	[n+1]-(1-\frac{n}{2}+n)=\frac{n}{2} & \text{ if } n\equiv 0 \mod 4\\
	1 - (\frac{n-1}{2}-n)={\color{black}\frac{n+3}{2}} & \text{ if } n\equiv 1 \mod 4\\
	[n+1]+1-(1-\frac{n}{2}+n)=\frac{n}{2}+1 & \text{ if } n\equiv 2 \mod 4\\
	0- (\frac{n-1}{2}-n)=\frac{n+1}{2} & \text{ if } n\equiv 3 \mod 4
	\end{cases}.
	\end{align}
	In all cases, the total regret is $\frac{n}{2}+O(1)\in \Omega(\sqrt{t})$ completing the proof of the theorem.
\end{proof}

\end{document}